\newtheorem{theorem}{Theorem}
\newtheorem{lemma}{Lemma}
\newtheorem{corollary}{Corollary}
\newtheorem{definition}{Definition}
\newtheorem{remark}{Remark}
\newcommand\ddfrac[2]{\frac{\displaystyle #1}{\displaystyle #2}}
\begin{document}
\title{Analysis of Age of Incorrect Information under Generic Transmission Delay} 

\author{%
\IEEEauthorblockN{Yutao Chen and Anthony Ephremides}\\
    \IEEEauthorblockA{Department of Electrical and Computer Engineering, University of Maryland}
}

\maketitle

\begin{abstract}
This paper investigates the Age of Incorrect Information (AoII) in a communication system whose channel suffers a random delay. We consider a slotted-time system where a transmitter observes a dynamic source and decides when to send updates to a remote receiver through the communication channel. The threshold policy, under which the transmitter initiates transmission only when the AoII exceeds the threshold, governs the transmitter's decision. In this paper, we analyze and calculate the performance of the threshold policy in terms of the achieved AoII. Using the Markov chain to characterize the system evolution, the expected AoII can be obtained precisely by solving a system of linear equations whose size is finite and depends on the threshold. We also give closed-form expressions of the expected AoII under two particular thresholds. Finally, calculation results show that there are better strategies than the transmitter constantly transmitting new updates.
\end{abstract}

\section{Introduction}
With the development of wireless technology and cheap sensors, real-time monitoring systems are widely used. Typically in such systems, a monitor monitors one or more events simultaneously and transmits updates to allow one or more receivers at a distance to have a good knowledge of the events. Therefore, the timeliness of information is often one of the most important performance indicators. The Age of Information (AoI), first introduced in~\cite{b1}, captures the freshness of information by tracking the time elapsed since the generation of the last received update. More precisely, let $V(t)$ be the generation time of the last update received up to time $t$. Then, AoI at time $t$ is defined by $\Delta_{AoI}(t) = t-V(t)$. After the introduction, it has attracted extensive attention and research~\cite{aoi1,aoi2,aoi3}. However, the limitation of AoI is that it ignores the information content of the transmitted updates. Therefore, it falls short in the context of remote estimation. For example, we want to estimate a rapidly changing event remotely. In this case, a small AoI does not necessarily mean that the receiver has accurate information about the event. Likewise, when the event changes slowly, the receiver can estimate relatively accurately without timely information.

Inspired by the above limitation, the Age of Incorrect Information (AoII) is introduced in~\cite{3}, which combines the timeliness and the information content. As presented in~\cite{3}, AoII is dominated by two penalty functions. The first is the time penalty function, which captures the time elapsed since the last time the receiver has perfect information about the remote event. The second is the information penalty function, which captures the information mismatch between the receiver and the remote event. Therefore, AoII captures not only the information mismatch between the event and the receiver but also the aging process of conflicting information. Moreover, by choosing different penalty functions, AoII is adaptable to various systems and communication goals. Hence, AoII is regarded as a semantic metric~\cite{semantic}.

Several works have been done since the introduction. Optimizing AoII under resource constraints is studied in~\cite{3, b2, b3}, with different assumptions on the system and different penalty function choices. AoII in scheduling problems is studied in~\cite{b4,b5}. However, all these papers assume that the transmission time of each update is deterministic and normalized. In this paper, we consider the communication system in which the transmission time of an update is random. A similar system setup is considered in~\cite{b6}, where the problem is studied based on simulation results. However, we provide a theoretical analysis of the system, and the results apply to generic transmission delay. The system with random transmission delay has also been studied in the context of remote estimation and AoI~\cite{b7,b10,b8,b9}. However, the problem considered in this paper is very different, as AoII is a combination of age-based metrics frameworks and error-based metrics frameworks.

The main contributions of this paper are: 1) We investigate the AoII in the system where the communication channel suffers a random delay. 2) We analyze the characteristics of the threshold policy, under which the transmitter initiates transmission only when AoII exceeds the threshold. 3) We calculate the expected AoII achieved by the threshold policy precisely.

The remainder of this paper is organized in the following way. In Section \ref{sec-system}, we introduce the system model. Then, in Section \ref{sec-MDP}, we theoretically analyze and calculate the expected AoII achieved by the threshold policy. Finally, we conclude the paper with numerical results, which are detailed in Section \ref{sec-Numerical}.

\section{System Overview}\label{sec-system}
\subsection{System Model}\label{sec-SystemModel}
We consider a transmitter-receiver pair in a slotted-time system, where the transmitter observes a dynamic source and needs to send status updates to the remote receiver through a communication channel. The dynamic source is modeled by a two-state symmetric Markov chain with state transition probability $p$. An illustration of the Markov chain is shown in Fig.~\ref{fig-MarkovChain}.
\begin{figure}
\centering \includegraphics[width=3in]{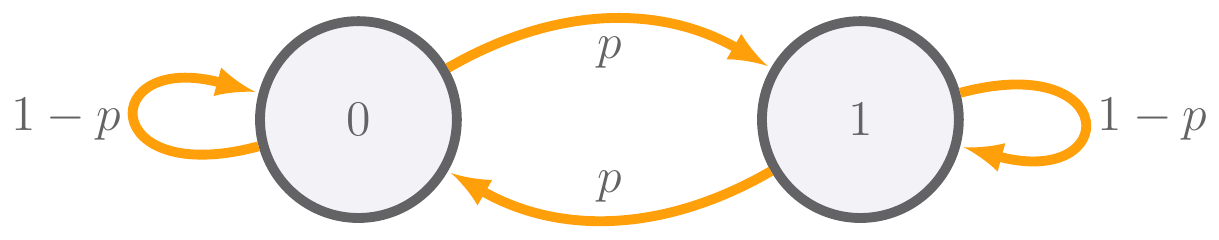} \caption{Two-state symmetric Markov chain with state transition probability $p$.}
\label{fig-MarkovChain}
\end{figure}
The transmitter receives an update from the dynamic source at the beginning of each time slot. We denote the update at time slot $k$ as $X_k$. The old update will be discarded upon the arrival of the new one. Then, the transmitter must decide whether to transmit the update based only on the system's current status. More precisely, when the channel is idle (i.e., no transmission in progress), the transmitter chooses between transmitting the update and staying idle. When the channel is busy, the transmitter cannot do anything other than stay idle. The transmission time $T\in\mathbb{N}^*$ of an update is a random variable with distribution denoted by $p_t\triangleq Pr(T=t)$. We assume that $T$ is independent and identically distributed. The channel is error-free, meaning the update will not be corrupted during transmission. When a transmission finishes, the communication channel is immediately available for the subsequent transmission.

The remote receiver will estimate the state of the dynamic source using the received updates. We denote by $\hat{X}_k$ the receiver's estimate at time slot $k$. According to~\cite{b9}, the best estimator when $p\leq\frac{1}{2}$ is the last received update. When $p>\frac{1}{2}$, the optimal estimator depends on the transmission time. In this paper, we consider only the case of $p\leq\frac{1}{2}$. Hence, the receiver uses the last received update as the estimate. For the case of $p>\frac{1}{2}$, the results can be extended using the corresponding best estimator. The receiver uses $ACK/NACK$ packets to inform the transmitter of its reception of the new update. $ACK/NACK$ packets are generally very small~\cite{2a2}. Hence, we assume they are reliably and instantaneously received by the transmitter. Then, when $ACK$ is received, the transmitter knows that the receiver's estimate has changed to the last sent update. When $NACK$ is received, the transmitter knows that the receiver's estimate did not change. In this way, the transmitter always knows the current estimate on the receiver side. An illustration of the system model is shown in Fig.~\ref{fig-SystemModel}.
\begin{figure}
\centering \includegraphics[width=3in]{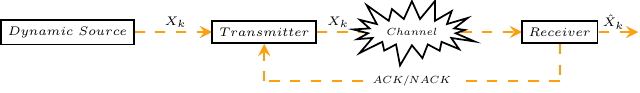}\caption{An illustration of the system model.}
\label{fig-SystemModel}
\end{figure}

\subsection{Age of Incorrect Information}
The system adopts the Age of Incorrect Information (AoII) as the performance metric. We first define $U_k$ as the last time instant before time $k$ (including $k$) that the receiver's estimate is correct. Mathematically,
\[
U_k \triangleq \max\{h:h\leq k, X_h = \hat{X}_h\}.
\]
Then, in a slotted-time system, AoII at time slot $k$ can be written as
\begin{equation}\label{eq-AoII}
\Delta_{AoII}(X_k,\hat{X}_k,k) = \sum_{h=U_k+1}^k\bigg(g(X_h,\hat{X}_h) F(h-U_k)\bigg),
\end{equation}
where $g(X_k,\hat{X}_k)$ is the information penalty function. $F(k) \triangleq f(k) - f(k-1)$ where $f(k)$ is the time penalty function. In this paper, we choose $g(X_k,\hat{X}_k) = |X_k-\hat{X}_k|$ and $f(k) = k$. Hence, $F(k) =1 $ for all $k$ and $g(X_k,\hat{X}_k)\in\{0,1\}$. Then, equation \eqref{eq-AoII} can be simplified as
\[
\Delta_{AoII}(X_k,\hat{X}_k,k) = k-U_k\triangleq\Delta_k.
\]
To characterize the evolution of $\Delta_k$, it is sufficient to characterize the value of $\Delta_{t+1}$ using $\Delta_k$ and the system dynamics. To this end, we distinguish between the following two cases.
\begin{itemize}
\item When the receiver's estimate is correct at time $k + 1$, we have $U_{k+1} = k + 1$. Then, by definition, $\Delta_{k+1} = 0$.
\item When the receiver's estimate is incorrect at time $k + 1$, we have $U_{k+1} = U_k$. Then, by definition, $\Delta_{k+1}=k+1-U_k=\Delta_k +1$.
\end{itemize}
Combining together, we have
\begin{equation}\label{eq-AoIIDynamic}
\Delta_{k+1} = \mathbbm{1}\{U_{k+1}\neq k+1\}(\Delta_k+1),
\end{equation}
where $\mathbbm{1}\{A\}$ is the indicator function, whose value is one when event $A$ occurs and zero otherwise. A sample path of $\Delta_k$ is shown in Fig.~\ref{fig-SamplePath}.
\begin{figure}
\centering \includegraphics[width=3in]{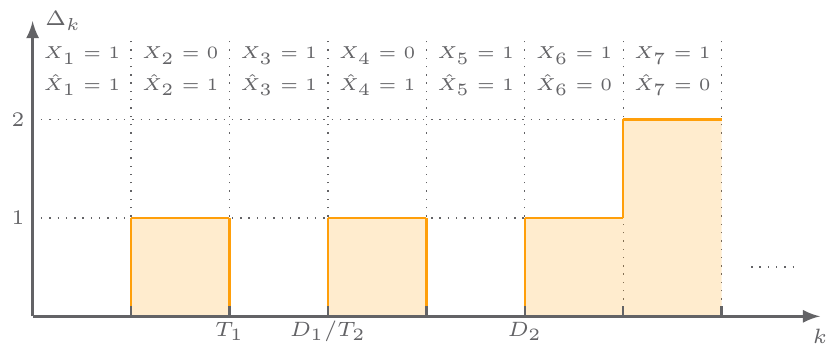}
\caption{A sample path of $\Delta_k$, where $T_i$ and $D_i$ are the transmission start time and the delivery time of the $i$-th update, respectively. At $T_1$, the transmitted update is $X_3$. The estimate at time slot $6$ (i.e., $\hat{X}_6$) changes due to the reception of the update transmitted at $T_2$.}
\label{fig-SamplePath}
\end{figure}
When combined with the source dynamics and the receiver's estimate, the evolution of $\Delta_{k}$ can also be characterized by the following cases.
\begin{itemize}
\item When $\Delta_k=0$ and $\hat{X}_k = \hat{X}_{k+1}$, we know $\hat{X}_k=X_k$ and the receiver's estimate remains the same. Then, when the source remains in the same state (i.e., $X_k = X_{k+1}$), we have $X_{k+1} = \hat{X}_{k+1}$. Hence, $\Delta_{k+1}=0$ with probability $1-p$. On the other hand, when the source changes state (i.e., $X_k \neq X_{k+1}$), we have $X_{k+1} \neq \hat{X}_{k+1}$. Hence, $\Delta_{k+1}=1$ with probability $p$. Combining together, we have
\[
\Delta_{k+1} = \begin{dcases}
0 & w.p.\ 1-p,\\
1 & w.p.\ p.
\end{dcases}
\]
The same analysis can be applied to other cases. Hence, the details are omitted for the following cases.
\item When $\Delta_k=0$ and $\hat{X}_k \neq \hat{X}_{k+1}$, we have
\[
\Delta_{k+1} = \begin{dcases}
0 & w.p.\ p,\\
1 & w.p.\ 1-p.
\end{dcases}
\]
\item When $\Delta_k>0$ and $\hat{X}_k = \hat{X}_{k+1}$, we have
\[
\Delta_{k+1} = \begin{dcases}
0 & w.p.\ p,\\
\Delta_k+1 & w.p.\ 1-p.
\end{dcases}
\]
\item When $\Delta_k>0$ and the $\hat{X}_k \neq \hat{X}_{k+1}$, we have
\[
\Delta_{k+1} = \begin{dcases}
0 & w.p.\ 1-p,\\
\Delta_k+1 & w.p.\ p.
\end{dcases}
\]
\end{itemize}
Combining together, we obtain \eqref{eq-AoIIWithSource}.
\begin{figure*}[!t]
\normalsize
\begin{equation}\label{eq-AoIIWithSource}
\Delta_{k+1} = \begin{dcases}
\mathbbm{1}\{\hat{X}_k\neq\hat{X}_{k+1}; \Delta_k=0\} + \mathbbm{1}\{\hat{X}_k=\hat{X}_{k+1};\Delta_k>0\}(\Delta_k+1) & w.p.\ 1-p,\\
\mathbbm{1}\{\hat{X}_k=\hat{X}_{k+1}; \Delta_k=0\} + \mathbbm{1}\{\hat{X}_k\neq\hat{X}_{k+1};\Delta_k>0\}(\Delta_k+1) & w.p.\ p.\\
\end{dcases}
\end{equation}
\hrulefill
\vspace*{4pt}
\end{figure*}
In this paper, we use the expected AoII to quantify the performance. To this end, we first define a policy $\phi$ as the one that specifies the transmitter's action in each time slot. Then, the expected AoII achieved by policy $\phi$ is
\[
\bar{\Delta}_{\phi} \triangleq\lim_{K\to\infty} \frac{1}{K}\mathbb{E}_{\phi}\left(\sum_{k=0}^{K-1}\Delta_k\right).
\]
To evaluate $\bar{\Delta}_{\phi}$, we start with characterizing the system dynamics in the following subsection.

\subsection{System Dynamics}\label{sec-SystemDynamic}
The first thing to notice is that the transmission time for each update is unbounded. To simplify the analysis, we consider the following two more practical cases\footnote{All results presented in this paper apply to both cases unless stated otherwise.}.
\begin{itemize}
\item \textbf{Assumption 1}: We assume that the transmission will take at most $t_{max}$ time slots and the update will always be delivered. More precisely, we assume $1\leq T\leq t_{max}$ and
\[
\sum_{t=1}^{t_{max}}p_t = 1,\quad p_t\geq0,\ 1\leq t\leq t_{max}.
\]
In practice, we can always choose a sufficiently large $t_{max}$ so that the probability that the transmission time is longer than $t_{max}$ is negligible.
\item \textbf{Assumption 2}: We assume that the transmission survives a maximum of $t_{max}$ time slots. When the transmission lasts to the $t_{max}$th time slot, the update is either delivered or discarded at the end of the $t_{max}$th time slot. In both cases, the channel will be available for a new transmission at the next time slot. In practice, similar techniques, such as time-to-live (TTL)~\cite{b12}, are used to prevent an update from occupying the channel for too long.
\end{itemize}
\begin{remark}
When $t_{max}=1$, the system reduces to the one considered in~\cite{3}. Hence, for the remainder of this paper, we consider the case of $t_{max}>1$.
\end{remark}
We notice that the status of the system at the beginning of the $k$th time slot can be fully captured by the triplet $s_k\triangleq(\Delta_k,t_k,i_k)$ where $t_k\in\{0,1,...,t_{max}-1\}$ indicates the time the current transmission has been in progress. We define $t_k=0$ if there is no transmission in progress. The last element $i_k\in\{-1,0,1\}$ indicates the state of the channel. We define $i_k=-1$ when the channel is idle. $i_k=0$ if the channel is busy and the transmitting update is the same as the receiver's current estimate, and $i_k=1$ when the transmitting update is different from the receiver's current estimate.
\begin{remark}\label{rem-ti}
Note that, according to the definition of $t_k$ and $i_k$, $i_k=-1$ if and only if $t_k=0$. In this case, the channel is idle.
\end{remark}
Then, characterizing the system dynamics is equivalent to characterizing the value of $s_{k+1}$ using $s_k$ and the transmitter's action. We denote by $a_k\in\{0,1\}$ the transmitter's decision. We define $a_k=0$ when the transmitter decides not to initiate a transmission and $a_k=1$ otherwise. Hence, the system dynamics can be fully characterized by $P_{s_k,s_{k+1}}(a_k)$, which is defined as the probability that action $a_k$ at $s_k$ leads to $s_{k+1}$.

\section{AoII Analysis}\label{sec-MDP}
As is proved in~\cite{b2,b3,b4,b5}, the AoII-optimal policy often has a threshold structure. Hence, we consider threshold policy.
\begin{definition}[Threshold policy]\label{def-ThreholdPolicy}
Under threshold policy $\tau$, the transmitter will initiate a transmission only when the current AoII is no less than threshold $\tau\in\mathbbm{N}^0$ and the channel is idle.
\end{definition}
\begin{remark}
We define $\tau\triangleq\infty$ as the policy under which the transmitter never initiates any transmissions.
\end{remark}
We notice that the system dynamics under threshold policy can be characterized by a discrete-time Markov chain (DTMC). Without loss of generality, we assume the DTMC starts at state $s=(0,0,-1)$. Then, the state space $\mathcal{S}$ consists of all the states accessible from state $s=(0,0,-1)$. Since state $s=(0,0,-1)$ is positive recurrent and communicates with each state $s\in\mathcal{S}$, the stationary distribution exists. Let $\pi_{s}$ be the steady-state probability of state $s$. Then, $\pi_s$ satisfies the following balance equation.
\[
\pi_s = \sum_{s'\in\mathcal{S}}P_{s',s}(a)\pi_{s'},\quad s\in\mathcal{S},
\]
where $P_{s',s}(a)$ is the single-step state transition probability, and the action $a$ depends on the threshold policy. Then, the first step in calculating the expected AoII achieved by the threshold policy is to calculate the stationary distribution of the induced DTMC. However, the problem arises as the state space $\mathcal{S}$ is infinite and intertwined. To simplify the state transitions, we recall that the transmitter can only stay idle (i.e., $a=0$) when the channel is busy. Let $\mathcal{S}'=\{s=(\Delta,t,i):i\neq-1\}$ be the set of the state where the channel is busy. Then, for $s'\in\mathcal{S}'$, $P_{s',s}(a) = P_{s',s}(0)$ and is independent of the threshold policy. Hence, for any threshold policy and each $s\in\mathcal{S}\setminus\mathcal{S}'$, we can repeatedly replace $\pi_{s'}$, where $s'\in\mathcal{S}'$, with the corresponding balance equation until we get the following equation.
\begin{equation}\label{eq-CompactBalanceEq}
\pi_{s} = \sum_{s'\in \mathcal{S}\setminus\mathcal{S}'}P_{\Delta',\Delta}(a)\pi_{s'},\quad s\in\mathcal{S}\setminus\mathcal{S}',
\end{equation}
where $P_{\Delta',\Delta}(a)$ is the multi-step state transition probability from state $s'=(\Delta',0,-1)$ to state $s=(\Delta,0,-1)$ under action $a$. For simplicity, we write \eqref{eq-CompactBalanceEq} as
\begin{equation}\label{eq-CompactBalanceEq2}
\pi_{\Delta} = \sum_{\Delta'\geq0}P_{\Delta',\Delta}(a)\pi_{\Delta'},\quad \Delta\geq0.
\end{equation}
As we will see in the following subsections, $\pi_\Delta$ is sufficient to calculate the expected AoII obtained by any threshold policy.
\begin{remark}
The intuition behind the simplification of the balance equations is as follows. We recall that the system dynamics when the channel is busy are independent of the adopted policy. Hence, we can calculate these dynamics in advance so that the balance equations contain only the states in which the transmitter needs to make decisions. 
\end{remark}
\noindent In the next subsection, we derive the expression of $P_{\Delta,\Delta'}(a)$.

\subsection{Multi-step State Transition Probability}\label{sec-StateTransProb}
We start with the case of $a=0$. In this case, no update will be transmitted, and $P_{\Delta,\Delta'}(0)$ is independent of the transmission delay. Then, according to \eqref{eq-AoIIWithSource}, we have
\[
P_{0,\Delta'}(0) = \begin{dcases}
1-p & \Delta'=0,\\
p & \Delta'=1,
\end{dcases}
\]
and for $\Delta>0$,
\[
P_{\Delta,\Delta'}(0) = \begin{dcases}
p & \Delta'=0,\\ 
1-p & \Delta' = \Delta+1.
\end{dcases}
\]
In the sequel, we focus on the case of $a=1$. We define $P^{t}_{\Delta,\Delta'}(a)$ as the probability that action $a$ at state $s=(\Delta,0,-1)$ will lead to state $s'=(\Delta',0,-1)$, given that the transmission takes $t$ time slots. Then, under \textbf{Assumption 1},
\[
P_{\Delta,\Delta'}(1) = \sum_{t=1}^{t_{max}}p_tP^t_{\Delta,\Delta'}(1).
\]
Hence, it is sufficient to obtain the expressions of $P^t_{\Delta,\Delta'}(1)$. To this end, we define $p^{(t)}$ as the probability that the dynamic source will remain in the same state after $t$ time slots. Since the Markov chain is symmetric, $p^{(t)}$ is independent of the state and can be calculated by
\[
p^{(t)} = \left(\begin{bmatrix}
1-p & p\\
p & 1-p
\end{bmatrix}^t\right)_{11},
\]
where the subscript indicates the row number and the column number of the target probability. For the consistency of notation, we define $p^{(0)}\triangleq1$. Then, we have the following lemma.
\begin{lemma}\label{lem-CompactTrans}
Under \textbf{Assumption 1}, 
\begin{equation}\label{eq-Assumption1Trans}
P_{\Delta,\Delta'}(1) = \sum_{t=1}^{t_{max}}p_tP^t_{\Delta,\Delta'}(1),
\end{equation}
where
\[
P^{t}_{0,\Delta'}(1) = 
\begin{dcases}
p^{(t)} & \Delta'=0,\\
p^{(t-k)}p(1-p)^{k-1} & 1\leq\Delta'= k\leq t,\\
0 & otherwise,
\end{dcases}
\]
and for $\Delta>0$,
\begin{multline*}
P^{t}_{\Delta,\Delta'}(1)=\\
\begin{dcases}
p^{(t)} & \Delta'=0,\\
(1-p^{(t-1)})(1-p) & \Delta'=1,\\
(1-p^{(t-k)})p^{2}(1-p)^{k-2} & 2\leq \Delta'=k\leq t-1,\\
p(1-p)^{t-1} & \Delta'=\Delta+t,\\
0 & otherwise.
\end{dcases}
\end{multline*}
\end{lemma}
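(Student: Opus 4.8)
The plan is to fix a reference slot (say slot $0$, without loss of generality) at which the system is in state $(\Delta,0,-1)$ and the transmitter initiates a transmission that lasts exactly $t$ slots; then $P^t_{\Delta,\Delta'}(1)$ is simply the probability that the AoII equals $\Delta'$ at slot $t$, the first idle slot after delivery. The first step is to pin down the behaviour of the receiver's estimate over the window $\{0,1,\dots,t\}$. The transmitted update carries the symbol $X_0$, and since delivery occurs only at the end of the transmission, the estimate is constant and equal to $\hat X_0$ throughout slots $0,\dots,t-1$ and jumps to $X_0$ exactly at slot $t$. Recalling $\Delta'=\Delta_t=t-U_t$ with $U_t=\max\{h\le t:X_h=\hat X_h\}$, I would reduce every ``estimate correct at slot $h$'' event to a comparison of the source symbol $X_h$ against a fixed reference: for $0\le h\le t-1$ correctness means $X_h=\hat X_0$, while at $h=t$ it means $X_t=X_0$. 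Here the two regimes $\Delta=0$ (where $\hat X_0=X_0$) and $\Delta>0$ (where $\hat X_0\neq X_0$) must be separated, since they fix different reference symbols during the transmission.

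Next, for each admissible target $\Delta'$ I would write the event $\{U_t=t-\Delta'\}$ as an explicit pattern of the binary source trajectory, namely ``the last agreement happens at slot $t-\Delta'$ and every later slot up to $t$ is a disagreement,'' and then evaluate its probability using the symmetry and the Markov property of the source. Reaching the agreement slot contributes a return factor $p^{(t)}$ or $p^{(t-\Delta')}$ (or its complement $1-p^{(t-\Delta')}$), while the tail of consecutive disagreements factors into products of the one-step stay probability $1-p$ and switch probability $p$. For instance, when $\Delta=0$ and $\Delta'=k\ge 1$, the source must return to $X_0$ at slot $t-k$ (probability $p^{(t-k)}$), switch away at the next step (probability $p$), and then remain in the opposite state for the remaining $k-1$ steps (probability $(1-p)^{k-1}$), which multiply to the claimed $p^{(t-k)}p(1-p)^{k-1}$. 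The $\Delta>0$ cases are handled identically, except that the disagreement region during the transmission corresponds to $X_h=X_0$ (the opposite of the $\Delta=0$ situation), which flips the roles of stay and switch transitions and yields the $p^{2}(1-p)^{k-2}$ middle factor.

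The step I expect to be the main obstacle is the careful bookkeeping at the delivery slot, where the estimate's value switches from $\hat X_0$ to $X_0$; this value change is exactly what makes the $\Delta>0$ table asymmetric and produces the two genuine boundary cases. For $\Delta'=1$ (with $\Delta>0$) the pattern ``correct at $t-1$, incorrect at $t$'' translates, after the symbol flip, into the source lying in the non-$X_0$ state at both slots $t-1$ and $t$, so the last transition is a \emph{stay} and the probability is $(1-p^{(t-1)})(1-p)$ rather than a generic $p^2$-type expression. Likewise I would verify the vanishing entries: for $\Delta>0$ the value $\Delta'=t$ is impossible because it would demand a correct estimate at slot $0$, contradicting $\Delta>0$, and every $\Delta'$ strictly between $t-1$ and $\Delta+t$ is unreachable, while $\Delta'=\Delta+t$ corresponds to no agreement during the window and contributes $p(1-p)^{t-1}$. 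Assembling these per-$\Delta'$ probabilities and averaging over the delay distribution via $P_{\Delta,\Delta'}(1)=\sum_{t=1}^{t_{max}}p_tP^t_{\Delta,\Delta'}(1)$ completes the argument; the only real care needed is index alignment (which transition is a stay versus a switch) so that the exponents of $p$ and $1-p$ come out exactly as stated.
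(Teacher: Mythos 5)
Your proposal is correct and follows essentially the same route as the paper's proof: a case split on $\Delta=0$ versus $\Delta>0$, an enumeration of the reachable values of $\Delta'$, and for each one a factorization into the probability of agreement at slot $t-\Delta'$ (the $p^{(t-\Delta')}$ or $1-p^{(t-\Delta')}$ term) times the stay/switch tail that keeps the estimate wrong through delivery. Your explicit handling of the estimate's jump from $\hat X_0$ to $X_0$ at the delivery slot, and the resulting asymmetry in the $\Delta>0$ table (the $(1-p)$ stay for $\Delta'=1$, the $p^2(1-p)^{k-2}$ middle factor, and the impossibility of $\Delta'=t$), matches the paper's reasoning exactly.
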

\begin{proof}
The expression of $P^t_{\Delta,\Delta'}(1)$ is obtained by analyzing the system dynamics. The complete proof can be found in Appendix \ref{pf-CompactTrans} of the supplementary material.
\end{proof}
To get more insights, we provide the following corollary.
\begin{corollary}\label{lem-StateTransProb}
Under \textbf{Assumption 1}, equation \eqref{eq-Assumption1Trans} can be written equivalently as \eqref{eq-EquivalentEq1}
\begin{figure*}[!t]
\normalsize
\begin{equation}\label{eq-EquivalentEq1}
P_{\Delta,\Delta'}(1) =\\
\begin{dcases}
\sum_{t=\Delta'}^{t_{max}}p_tP^t_{\Delta,\Delta'}(1) & 0\leq\Delta'\leq t_{max}-1,\Delta\geq\Delta',\\
\sum_{t=\Delta'}^{t_{max}}p_tP^t_{\Delta,\Delta'}(1) + p_{t'}P^{t'}_{\Delta,\Delta'}(1) & 0\leq\Delta'\leq t_{max}-1,\Delta<\Delta',\\
p_{t'}P^{t'}_{\Delta,\Delta'}(1) & \Delta'\geq t_{max}.
\end{dcases}
\end{equation}
\hrulefill
\vspace*{4pt}
\end{figure*}
where $t'\triangleq\Delta'-\Delta$ and $P^{t'}_{\Delta,\Delta'}(1)\triangleq 0$ when $t'\leq0$ or when $t'>t_{max}$. Meanwhile, $P_{\Delta,\Delta'}(1)$ possesses the following properties.
\begin{enumerate}
\item $P_{\Delta,\Delta'}(1)$ is independent of $\Delta$ when $0\leq\Delta'\leq t_{max}-1$ and $\Delta\geq\Delta'$.
\item $P_{\Delta,\Delta'}(1) = P_{\Delta+\delta,\Delta'+\delta}(1)$ when $\Delta'\geq t_{max}$ and $\Delta\geq0$ for any $\delta\geq1$.
\item $P_{\Delta,\Delta'}(1)=0$ when $\Delta'>\Delta+t_{max}$ or when $t_{max}-1<\Delta'<\Delta+1$.
\end{enumerate}
\end{corollary}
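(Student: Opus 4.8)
The plan is to derive the equivalent form \eqref{eq-EquivalentEq1} directly from Lemma \ref{lem-CompactTrans} by reorganizing the finite sum $\sum_{t=1}^{t_{max}} p_t P^t_{\Delta,\Delta'}(1)$ according to the support, in $t$, of the summand. The starting observation is that, for a fixed target $\Delta'$, the piecewise formulas for $P^t_{\Delta,\Delta'}(1)$ encode two qualitatively different ways of reaching $(\Delta',0,-1)$: a family of ``delivery/reset'' outcomes, in which the update is delivered and the post-delivery age is set by the source drift, always producing a value with $0\le\Delta'\le t-1$ (for $\Delta>0$) or $0\le\Delta'\le t$ (for $\Delta=0$); and a single ``no-reset'' (aging) outcome $\Delta'=\Delta+t$, in which the mismatch streak survives the entire transmission. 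For each regime of $(\Delta,\Delta')$ I would read off from Lemma \ref{lem-CompactTrans} exactly which $t\in\{1,\dots,t_{max}\}$ make $P^t_{\Delta,\Delta'}(1)$ nonzero and then restrict the sum to that set.

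Concretely, I would argue case by case. When $\Delta'\ge t_{max}$, no reset outcome can reach $\Delta'$ (reset always lands in $0\le\Delta'\le t_{max}-1$), so only the aging branch survives; it is active at the unique $t=t'\triangleq\Delta'-\Delta$, giving the single term $p_{t'}P^{t'}_{\Delta,\Delta'}(1)$, which vanishes whenever $t'\le0$ or $t'>t_{max}$ by the stated convention --- this is the third case. When $0\le\Delta'\le t_{max}-1$ and $\Delta\ge\Delta'$, the aging branch would require $t'=\Delta'-\Delta\le0$ and is therefore absent, so only the reset terms remain; inspecting the support shows these are carried by $t\ge\Delta'$ (with the boundary terms vanishing precisely where Lemma \ref{lem-CompactTrans} gives zero, e.g.\ the $t=1$, $\Delta'=1$ entry), which yields the first case. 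When $0\le\Delta'\le t_{max}-1$ and $\Delta<\Delta'$, both mechanisms contribute: the reset terms again form $\sum_{t=\Delta'}^{t_{max}}p_tP^t_{\Delta,\Delta'}(1)$, while the aging outcome sits at $t=t'=\Delta'-\Delta\ge1$ and is added as the separate term $p_{t'}P^{t'}_{\Delta,\Delta'}(1)$, giving the second case.

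With \eqref{eq-EquivalentEq1} in hand, the three properties follow almost immediately from the explicit expressions. Property~1 holds because, in the first case, every reset value listed in Lemma \ref{lem-CompactTrans}, namely $p^{(t)}$, $(1-p^{(t-1)})(1-p)$, and $(1-p^{(t-k)})p^2(1-p)^{k-2}$, depends only on $t$ and $\Delta'$ and not on $\Delta$, so the whole sum is $\Delta$-independent. Property~2 follows from the third case: the lone surviving term is the aging value $p(1-p)^{t'-1}$ with $t'=\Delta'-\Delta$, and since a common shift $\delta$ leaves $t'=(\Delta'+\delta)-(\Delta+\delta)$ unchanged, we get $P_{\Delta,\Delta'}(1)=P_{\Delta+\delta,\Delta'+\delta}(1)$. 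Property~3 is read off the supports: if $\Delta'>\Delta+t_{max}$ then $t'>t_{max}$ kills the aging term while $\Delta'>t_{max}-1$ rules out reset; and $t_{max}-1<\Delta'<\Delta+1$ forces $\Delta'\ge t_{max}$ (no reset) together with $t'=\Delta'-\Delta\le0$ (no aging), so $P_{\Delta,\Delta'}(1)=0$ in both situations.

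The main obstacle is the boundary bookkeeping: ensuring that the reset sum $\sum_{t=\Delta'}^{t_{max}}$ and the separate aging term $p_{t'}P^{t'}_{\Delta,\Delta'}(1)$ together partition the original sum with neither overlap nor omission. The delicate point is $\Delta=0$, where the aging outcome $\Delta'=\Delta+t$ collapses onto the upper edge $k=t$ of the reset family, so the $t=\Delta'$ contribution and the aging contribution coincide; the reorganization must be set up so that this shared contribution is accounted exactly once. For $\Delta>0$ this is automatic, since $P^{\Delta'}_{\Delta,\Delta'}(1)=0$ places the aging term at $t'=\Delta'-\Delta<\Delta'$, strictly outside the reset range. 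I would resolve the $\Delta=0$ case by tracking precisely which entries of Lemma \ref{lem-CompactTrans} vanish at the endpoints, so that the stated lower limit $t=\Delta'$ is justified in each regime. Once this accounting is pinned down, the remaining steps are routine substitutions into the formulas of Lemma \ref{lem-CompactTrans}.
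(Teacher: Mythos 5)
Your proposal is correct and follows essentially the same route as the paper's own proof: read off from Lemma \ref{lem-CompactTrans} which $t$ make $P^t_{\Delta,\Delta'}(1)$ nonzero (the delivery/reset family supported on $t\geq\Delta'$ versus the single aging outcome at $t'=\Delta'-\Delta$), restrict the sum \eqref{eq-Assumption1Trans} accordingly to obtain \eqref{eq-EquivalentEq1}, and then deduce the three properties from the $\Delta$-independence of the reset terms and the fact that the aging term depends only on $t'$. The boundary subtlety you flag at $\Delta=0$ is genuine and is in fact glossed over by the paper (whose proof only says the equivalent expression "can be obtained easily"): for $\Delta=0$ and $1\leq\Delta'\leq t_{max}-1$ the aging outcome coincides with the $t=\Delta'$ reset term, so the second case of \eqref{eq-EquivalentEq1}, read literally, counts $p_{\Delta'}P^{\Delta'}_{0,\Delta'}(1)$ twice — your insistence that this shared contribution be accounted exactly once is the more careful reading.
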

\begin{proof}
The equivalent expression and the properties can be obtained by analyzing the equations detailed in Lemma \ref{lem-CompactTrans}. The complete proof can be found in Appendix \ref{pf-StateTransProb} of the supplementary material.
\end{proof}
The state transition probabilities under \textbf{Assumption 2} can be obtained similarly. To this end, we first define $p_{t^+}\triangleq\sum_{t=t_{max}+1}^{\infty}p_t$ as the probability that the transmission will be terminated and $P^{t^+}_{\Delta,\Delta'}(a)$ as the probability that action $a$ at state $s=(\Delta,0,-1)$ will result in state $s'=(\Delta',0,-1)$ when the transmission is terminated. Then, we have the following lemma.
\begin{lemma}\label{lem-Case2TransProb}
Under \textbf{Assumption 2},
\begin{equation}\label{eq-compactTransProbs}
P_{\Delta,\Delta'}(1) = \sum_{t=1}^{t_{max}}p_tP^{t}_{\Delta,\Delta'}(1) + p_{t^+}P^{t^+}_{\Delta,\Delta'}(1),
\end{equation}
where
\[
P^{t}_{0,\Delta'}(1) = 
\begin{dcases}
p^{(t)} & \Delta'=0,\\
p^{(t-k)}p(1-p)^{k-1} & 1\leq\Delta'= k\leq t,\\
0 & otherwise,
\end{dcases}
\]
\[
P^{t^+}_{0,\Delta'}(1) = P^{t_{max}}_{0,\Delta'}(1),
\]
and for $\Delta>0$,
\begin{multline*}
P^{t}_{\Delta,\Delta'}(1) = \\
\begin{dcases}
p^{(t)} & \Delta'=0,\\
(1-p^{(t-1)})(1-p) & \Delta'=1,\\
(1-p^{(t-k)})p^{2}(1-p)^{k-2} & 2\leq\Delta'=k\leq t-1,\\
p(1-p)^{t-1} & \Delta'=\Delta+t,\\
0 & otherwise,
\end{dcases}
\end{multline*}
\begin{multline*}
P^{t^+}_{\Delta,\Delta'}(1) = \\
\begin{dcases}
1-p^{(t_{max})} & \Delta'=0,\\
(1-p^{(t_{max}-k)})p(1-p)^{k-1} & 1\leq\Delta'= k\leq t_{max}-1,\\
(1-p)^{t_{max}} & \Delta' = \Delta+t_{max},\\
0 & otherwise.
\end{dcases}
\end{multline*}
Under \textbf{Assumption 2}, equation \eqref{eq-compactTransProbs} can be written equivalently as \eqref{eq-EquivalentEq2}.
\begin{figure*}[!t]
\normalsize
\begin{equation}\label{eq-EquivalentEq2}
P_{\Delta,\Delta'}(1) =
\begin{dcases}
\sum_{t=\Delta'}^{t_{max}}p_tP^t_{\Delta,\Delta'}(1)+ p_{t^+}P^{t^+}_{\Delta,\Delta'}(1) & 0\leq\Delta'\leq t_{max}-1,\Delta\geq\Delta',\\
\sum_{t=\Delta'}^{t_{max}}p_tP^t_{\Delta,\Delta'}(1) + p_{t'}P^{t'}_{\Delta,\Delta'}(1)+ p_{t^+}P^{t^+}_{\Delta,\Delta'}(1) & 0\leq\Delta'\leq t_{max}-1,\Delta<\Delta',\\
p_{t'}P^{t'}_{\Delta,\Delta'}(1)+ p_{t^+}P^{t^+}_{\Delta,\Delta'}(1) & \Delta'\geq t_{max},\\
0 & otherwise.
\end{dcases}
\end{equation}
\hrulefill
\vspace*{4pt}
\end{figure*}
Meanwhile, $P_{\Delta,\Delta'}(1)$ possesses the following properties.
\begin{enumerate}
\item $P_{\Delta,\Delta'}(1)$ is independent of $\Delta$ when $0\leq\Delta'\leq t_{max}-1$ and $\Delta\geq\max\{1,\Delta'\}$.
\item $P_{\Delta,\Delta'}(1) = P_{\Delta+\delta,\Delta'+\delta}(1)$ when $\Delta'\geq t_{max}$ and $\Delta>0$ for any $\delta\geq1$.
\item $P_{\Delta,\Delta'}(1)=0$ when $\Delta'>\Delta+t_{max}$ or when $t_{max}-1<\Delta'<\Delta+1$.
\end{enumerate}
\end{lemma}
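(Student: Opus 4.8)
The plan is to condition on the realized transmission time and split $P_{\Delta,\Delta'}(1)$ into a delivery contribution and a termination contribution. Conditioning on $T=t$ with $1\leq t\leq t_{max}$ (probability $p_t$), the update is delivered after exactly $t$ slots, and this branch is governed by precisely the same dynamics as under \textbf{Assumption 1}; conditioning on $T>t_{max}$ (probability $p_{t^+}=\sum_{t=t_{max}+1}^{\infty}p_t$), the channel is cleared at the end of slot $t_{max}$ and the update is discarded. This yields the decomposition \eqref{eq-compactTransProbs}. Since the delivery branch is identical to the setting of Lemma~\ref{lem-CompactTrans}, I would invoke that lemma verbatim for every $P^t_{\Delta,\Delta'}(1)$ with $1\leq t\leq t_{max}$, so that the only genuinely new kernel to compute is the termination kernel $P^{t^+}_{\Delta,\Delta'}(1)$.

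To obtain $P^{t^+}$, I would track the AoII over the $t_{max}$ slots during which the estimate is frozen, since a discarded update never reaches the receiver and $\hat{X}$ is unchanged from transmission start to termination. Starting from $\Delta=0$, the transmitted update coincides with the current estimate, so discarding it leaves exactly the same estimate as delivering it; hence the discard branch reproduces the $t_{max}$-slot delivery kernel, giving $P^{t^+}_{0,\Delta'}(1)=P^{t_{max}}_{0,\Delta'}(1)$. Starting from $\Delta>0$, the frozen estimate is initially incorrect while the source runs the symmetric chain, and the terminal AoII $\Delta'$ is fixed by the most recent slot at which the source revisits the estimate. I would then read off the three regimes directly: $\Delta'=0$ requires the source to sit on the estimate at the final slot (probability $1-p^{(t_{max})}$); $\Delta'=k$ with $1\leq k\leq t_{max}-1$ requires a last match $k$ slots before the end followed by $k$ consecutive mismatches, contributing $(1-p^{(t_{max}-k)})p(1-p)^{k-1}$; and $\Delta'=\Delta+t_{max}$ requires no match at all, i.e. the source staying put for all $t_{max}$ steps, contributing $(1-p)^{t_{max}}$. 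Assembling these with the delivery kernels reproduces \eqref{eq-compactTransProbs}.

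For the equivalent form \eqref{eq-EquivalentEq2}, I would reorganize the sum by partitioning on $\Delta'$ and using the support of $P^t_{\Delta,\Delta'}(1)$ from Lemma~\ref{lem-CompactTrans}: a reset-type landing at a fixed $\Delta'\leq t_{max}-1$ is fed by the delivery terms $t\geq\Delta'$, whereas the no-reset landing at $\Delta'=\Delta+t$ contributes only the single term $t'=\Delta'-\Delta$, valid when $1\leq t'\leq t_{max}$. Collecting these, together with the always-present $p_{t^+}P^{t^+}$ contribution, yields the three active branches and the trailing zero. The three properties then follow by inspection of the closed forms: the $\Delta$-independence of Property~1 holds because, for $\Delta\geq\max\{1,\Delta'\}$ and $\Delta'\leq t_{max}-1$, neither the surviving delivery terms nor $P^{t^+}$ depends on $\Delta$; the shift invariance of Property~2 reflects that, for $\Delta'\geq t_{max}$, only the no-reset term survives and depends on $\Delta,\Delta'$ solely through $t'=\Delta'-\Delta$; and the vanishing in Property~3 encodes the reachable window $\Delta+1\leq\Delta'\leq\Delta+t_{max}$ together with $\Delta'\leq t_{max}-1$.

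The main obstacle I anticipate is not any single computation but the boundary bookkeeping that separates \textbf{Assumption 2} from \textbf{Assumption 1}. In particular, the termination term breaks the coincidence between the $\Delta=0$ and $\Delta>0$ rows that held earlier: since $P^{t^+}_{0,0}(1)=p^{(t_{max})}$ while $P^{t^+}_{\Delta,0}(1)=1-p^{(t_{max})}$ for $\Delta>0$, the $\Delta=0$ row is genuinely different, which is exactly why Property~1 must require $\Delta\geq\max\{1,\Delta'\}$ rather than the weaker $\Delta\geq\Delta'$ of Corollary~\ref{lem-StateTransProb}. Equal care will be needed to ensure that the single no-reset term $t'=\Delta'-\Delta$ is counted once and only within its validity window in \eqref{eq-EquivalentEq2}.
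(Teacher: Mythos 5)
Your proposal is correct and follows essentially the same route as the paper's proof: decompose by conditioning on the realized transmission time, reuse the Assumption‑1 delivery kernels, derive the termination kernel $P^{t^+}$ by a direct case analysis of the source evolving against the frozen estimate (with the $\Delta=0$ coincidence $P^{t^+}_{0,\Delta'}=P^{t_{max}}_{0,\Delta'}$), and then read off the equivalent form and the three properties from the supports and $\Delta$‑independence of the kernels. Your observation that $P^{t^+}_{0,0}=p^{(t_{max})}$ versus $P^{t^+}_{\Delta,0}=1-p^{(t_{max})}$ for $\Delta>0$ is precisely why Property~1 needs $\Delta\geq\max\{1,\Delta'\}$ here, matching the paper's reasoning.
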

\begin{proof}
The proof follows similar steps as presented in the proofs of Lemma \ref{lem-CompactTrans} and Corollary \ref{lem-StateTransProb}. The complete proof can be found in Appendix \ref{pf-Case2TransProb} of the supplementary material.
\end{proof}
As the expressions of $P_{\Delta,\Delta'}(a)$ under both assumptions are clarified, we solve for $\pi_{\Delta}$ in the next subsection.

\subsection{Stationary Distribution}\label{sec-SteadyState}
Let $ET$ be the expected transmission time of an update. Since the channel remains idle if no transmission is initiated and the expected transmission time of an update is $ET$, $\pi_{\Delta}$ satisfies the following equation.
\begin{equation}\label{eq-TotalProbs}
\sum_{\Delta=0}^{\tau-1}\pi_\Delta + ET\sum_{\Delta=\tau}^{\infty}\pi_\Delta = 1,
\end{equation}
where $ET= \sum_{t=1}^{t_{max}}tp_t$ under \textbf{Assumption 1} and $ET= \sum_{t=1}^{t_{max}}tp_t+t_{max}p_{t^+}$ under \textbf{Assumption 2}. We notice that there is still infinitely many $\pi_{\Delta}$ to calculate. To overcome the infinity, we recall that, under threshold policy, the suggested action is $a=1$ for all the state $s=(\Delta,0,-1)$ with $\Delta\geq\tau$. Hence, we define $\Pi\triangleq \sum_{\Delta=\omega}^{\infty}\pi_\Delta$ where $\omega \triangleq t_{max} + \tau+1$. As we will see in the following subsections, $\Pi$ and $\pi_{\Delta}$ for $0\leq\Delta<\omega-1$ are sufficient for calculating the expected AoII achieved by the threshold policy. With $\Pi$ in mind, we have the following theorem.
\begin{theorem}\label{prop-StationaryDistribution}
For $0<\tau<\infty$, $\Pi$ and $\pi_{\Delta}$ for $0\leq\Delta<\omega-1$ are the solution to the following system of linear equations.
\[
\pi_0 = (1-p)\pi_0 + p\sum_{i=1}^{\tau-1}\pi_i+ P_{1,0}(1)\left(\sum_{i=\tau}^{\omega-1}\pi_i+\Pi\right).
\]
\[
\pi_1= p\pi_0 + P_{1,1}(1)\left(\sum_{i=\tau}^{\omega-1}\pi_i+\Pi\right).
\]
\[
\Pi = \sum_{i=\tau+1}^{\omega-1}\left(\sum_{k=\tau+1}^iP_{i,t_{max}+k}(1)\right)\pi_i + \sum_{i=1}^{t_{max}}\bigg(P_{\omega,\omega+i}(1)\bigg)\Pi.
\]
\[
\sum_{i=0}^{\tau-1}\pi_i + ET\left(\sum_{i=\tau}^{\omega-1}\pi_i+\Pi\right) = 1.
\]
For each $2\leq\Delta\leq t_{max}-1$,
\begin{multline*}
\pi_\Delta =\\
\begin{dcases}
(1-p)\pi_{\Delta-1} + P_{\tau,\Delta}(1)\left(\sum_{i=\tau}^{\omega-1}\pi_i+\Pi\right) & \Delta-1<\tau,\\
\sum_{i=\tau}^{\Delta-1}P_{i,\Delta}(1)\pi_i + P_{\Delta,\Delta}(1)\left(\sum_{i=\Delta}^{\omega-1}\pi_i+\Pi\right) & \Delta-1\geq\tau.
\end{dcases}
\end{multline*}
For each $t_{max}\leq\Delta\leq\omega-1$,
\[
\pi_{\Delta} = \begin{dcases}
(1-p)\pi_{\Delta-1} & \Delta-1<\tau,\\
\sum_{i=\tau}^{\Delta-1}P_{i,\Delta}(1)\pi_i & \Delta-1\geq\tau.
\end{dcases}
\]
\end{theorem}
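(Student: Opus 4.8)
The plan is to derive each equation in the statement directly from the compact balance equation \eqref{eq-CompactBalanceEq2}, namely $\pi_\Delta = \sum_{\Delta'\geq0}P_{\Delta',\Delta}(a)\pi_{\Delta'}$, by plugging in the multi-step transition probabilities from Lemma~\ref{lem-CompactTrans} (or Lemma~\ref{lem-Case2TransProb}) and exploiting the threshold structure. Under threshold policy $\tau$, the relevant action is $a=0$ for every source state $s=(\Delta,0,-1)$ with $\Delta<\tau$ and $a=1$ for every such state with $\Delta\geq\tau$. So the first task is to split the sum $\sum_{\Delta'\geq0}$ into the idle-action block $\{0,\dots,\tau-1\}$, where I substitute the simple two-line expression for $P_{\Delta',\Delta}(0)$ given at the start of Section~\ref{sec-StateTransProb}, and the transmit-action block $\{\Delta'\geq\tau\}$, where I substitute $P_{\Delta',\Delta}(1)$.

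First I would verify the equations for $\pi_0$ and $\pi_1$. For $\pi_0$, the incoming transitions under $a=0$ are: a self-loop from $\Delta'=0$ with probability $1-p$, and a transition from each $\Delta'\in\{1,\dots,\tau-1\}$ with probability $p$; this yields the terms $(1-p)\pi_0 + p\sum_{i=1}^{\tau-1}\pi_i$. The contributions under $a=1$ from every $\Delta'\geq\tau$ carry the factor $P_{\Delta',0}(1)$, which by Property~1 of Corollary~\ref{lem-StateTransProb} equals the $\Delta$-independent constant $P_{1,0}(1)$ (since $\Delta'=\tau\geq1$ guarantees $\Delta'\geq\Delta$ for the target $\Delta=0$), so these collapse into the single factor $P_{1,0}(1)$ multiplying $\sum_{i=\tau}^{\omega-1}\pi_i+\Pi=\sum_{i=\tau}^{\infty}\pi_i$; the same constancy argument gives the $\pi_1$ equation. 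For the generic rows $2\leq\Delta\leq t_{max}-1$ and $t_{max}\leq\Delta\leq\omega-1$, I would again split by whether the immediate predecessor $\Delta-1$ lies below or above the threshold. The $a=0$ contribution to $\pi_\Delta$ is exactly $(1-p)\pi_{\Delta-1}$ whenever $\Delta-1<\tau$ (the only idle transition that raises age by one), and vanishes once $\Delta-1\geq\tau$; the $a=1$ contributions are handled by Property~1 (constancy in $\Delta'$ giving the single $P_{\tau,\Delta}(1)$ term when all transmitting predecessors are $\geq\Delta$) or by summing $\sum_{i=\tau}^{\Delta-1}P_{i,\Delta}(1)\pi_i$ together with the lumped self-term $P_{\Delta,\Delta}(1)(\sum_{i=\Delta}^{\omega-1}\pi_i+\Pi)$. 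The normalization equation is just \eqref{eq-TotalProbs} with the tail rewritten via $\sum_{\Delta=\tau}^{\infty}\pi_\Delta=\sum_{i=\tau}^{\omega-1}\pi_i+\Pi$.

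The main obstacle, and the step deserving the most care, will be the equation for $\Pi=\sum_{\Delta=\omega}^{\infty}\pi_\Delta$. Here I would sum the balance equation \eqref{eq-CompactBalanceEq2} over all $\Delta\geq\omega$ and interchange the order of summation to obtain $\Pi=\sum_{\Delta'}\big(\sum_{\Delta\geq\omega}P_{\Delta',\Delta}(1)\big)\pi_{\Delta'}$. The key structural fact is Property~3 of Corollary~\ref{lem-StateTransProb}: $P_{\Delta',\Delta}(1)=0$ unless $\Delta\leq\Delta'+t_{max}$, so a predecessor $\Delta'$ can reach the tail $\{\Delta\geq\omega\}=\{\Delta\geq\tau+t_{max}+1\}$ only when $\Delta'\geq\tau+1$. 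This bounds the relevant predecessors to $\Delta'\in\{\tau+1,\dots,\omega-1\}$ plus the tail itself. For a finite predecessor $\Delta'=i$ in that range, the reachable tail states are $\Delta=t_{max}+k$ for $\tau+1\leq k\leq i$, which is precisely why the inner sum $\sum_{k=\tau+1}^{i}P_{i,t_{max}+k}(1)$ appears. For predecessors already in the tail, I would invoke the shift-invariance Property~2, $P_{\Delta',\Delta}(1)=P_{\Delta'+\delta,\Delta+\delta}(1)$, to show that the aggregate return probability from the tail back into the tail is the common value $\sum_{i=1}^{t_{max}}P_{\omega,\omega+i}(1)$, independent of which tail state we start from, yielding the self-term $\big(\sum_{i=1}^{t_{max}}P_{\omega,\omega+i}(1)\big)\Pi$. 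The delicate points are confirming that no $\Delta'<\tau$ and no $\Delta'=\tau$ leak probability into the tail (so those are correctly absent), that the index ranges on the double sum are exactly right, and that the shift-invariance legitimately lumps the entire infinite tail into a single scalar multiple of $\Pi$ without any residual cross terms. Under Assumption~2 the identical argument runs with the extra $p_{t^+}$ terms from Lemma~\ref{lem-Case2TransProb}, and I would remark that the truncation term does not alter the support structure used above.
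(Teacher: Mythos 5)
Your proposal follows essentially the same route as the paper's proof: substitute the idle-action and transmit-action transition probabilities into the compact balance equations, use Property 1 of Corollary \ref{lem-StateTransProb} (and Lemma \ref{lem-Case2TransProb}) to pull the $\Delta$-independent factors out of the infinite sums, and then close the system by summing the balance equations over the tail $\Delta\geq\omega$ and invoking Properties 2 and 3 to collapse the tail-to-tail transitions into the single coefficient $\sum_{i=1}^{t_{max}}P_{\omega,\omega+i}(1)$ multiplying $\Pi$ (the paper's "expand and rearrange" of the double sum is exactly your interchange of summation order). The argument is correct; the only blemish is a harmless source/target notation swap when you justify replacing $P_{i,0}(1)$ by $P_{1,0}(1)$.
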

\begin{proof}
We delve into the definition of $\Pi$. By leveraging the structural property of the threshold policy and the properties of $P_{\Delta,\Delta'}(a)$, we obtain the above system of linear equations. The complete proof can be found in Appendix \ref{pf-StationaryDistribution} of the supplementary material.
\end{proof}
\begin{remark}
The size of the system of linear equations detailed in Theorem \ref{prop-StationaryDistribution} is $\omega+1$.
\end{remark}
\begin{corollary}\label{cor-AoIISpecialCase1}
When $\tau=0$,
\[
\pi_{0} = \frac{P_{1,0}(1)}{ET[1-P_{0,0}(1)+P_{1,0}(1)]}.
\]
For each $1\leq\Delta\leq t_{max}$,
\[
\pi_{\Delta} = \sum_{i=0}^{\Delta-1}P_{i,\Delta}(1)\pi_i + P_{\Delta,\Delta}(1)\left(\frac{1}{ET} - \sum_{i=0}^{\Delta-1}\pi_i\right).
\]
\[
\Pi = \ddfrac{\sum_{i=1}^{t_{max}}\left(\sum_{k=1}^{i}P_{i,t_{max}+k}(1)\right)\pi_i}{1-\sum_{i=1}^{t_{max}}P_{t_{max}+1,t_{max}+1+i}(1)}.
\]

When $\tau=1$,
\[
\pi_0 = \frac{P_{1,0}(1)}{pET+P_{1,0}(1)},\quad \pi_1 = \frac{pP_{1,0}(1)+pP_{1,1}(1)}{pET+P_{1,0}(1)}.
\]
For each $2\leq \Delta\leq t_{max}+1$,
\[
\pi_\Delta = \sum_{i=1}^{\Delta-1}P_{i,\Delta}(1)\pi_i + P_{\Delta,\Delta}(1)\left(\frac{1-\pi_0}{ET} - \sum_{i=1}^{\Delta-1}\pi_i\right).
\]
\[
\Pi = \ddfrac{\sum_{i=2}^{t_{max}+1}\left(\sum_{k=2}^iP_{i,t_{max}+k}(1)\right)\pi_i}{1-\sum_{i=1}^{t_{max}}P_{t_{max}+2,t_{max}+2+i}(1)}.
\]
\end{corollary}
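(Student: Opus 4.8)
The plan is to treat the two thresholds separately, exploiting that $\tau=1$ lies inside the regime $0<\tau<\infty$ already handled by Theorem~\ref{prop-StationaryDistribution}, whereas $\tau=0$ falls outside it and therefore needs a short, parallel derivation from the compact balance equation~\eqref{eq-CompactBalanceEq2}. In both cases the engine is the same: substitute the threshold-specific structure into the balance equations, then use the normalization to eliminate the lumped tail mass in favor of the low-index unknowns.

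For $\tau=1$ I would specialize Theorem~\ref{prop-StationaryDistribution} with $\tau=1$ and $\omega=t_{max}+2$. The empty sums $\sum_{i=1}^{\tau-1}\pi_i$ vanish, so the first equation collapses to $p\pi_0=P_{1,0}(1)\big(\sum_{i=1}^{\omega-1}\pi_i+\Pi\big)$. The normalization gives $\sum_{i=1}^{\omega-1}\pi_i+\Pi=(1-\pi_0)/ET$; substituting this and solving the scalar linear relation yields $\pi_0=P_{1,0}(1)/(pET+P_{1,0}(1))$, and feeding it back into the $\pi_1$ equation gives the stated $\pi_1$. For $2\le\Delta\le t_{max}+1$ I would rewrite the tail as $\sum_{i=\Delta}^{\omega-1}\pi_i+\Pi=(1-\pi_0)/ET-\sum_{i=1}^{\Delta-1}\pi_i$ to obtain a single recursion; the two ranges of Theorem~\ref{prop-StationaryDistribution} merge into one because property~3 of Corollary~\ref{lem-StateTransProb} forces $P_{\Delta,\Delta}(1)=0$ whenever $\Delta\ge t_{max}$, so the self-term silently disappears on the upper range. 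The formula for $\Pi$ is then read off by isolating $\Pi$ in the third equation of the theorem at $\tau=1$.

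For $\tau=0$, every idle state $(\Delta,0,-1)$ satisfies $\Delta\ge\tau$, so the suggested action is $a=1$ everywhere and~\eqref{eq-CompactBalanceEq2} reads $\pi_\Delta=\sum_{\Delta'\ge0}P_{\Delta',\Delta}(1)\pi_{\Delta'}$ for all $\Delta\ge0$, while the normalization~\eqref{eq-TotalProbs} reduces to $\sum_{\Delta\ge0}\pi_\Delta=1/ET$. The only genuine departure from Theorem~\ref{prop-StationaryDistribution} is at $\Delta=0$: since this state now transmits, its self-contribution is $P_{0,0}(1)\pi_0$ rather than the idle term $(1-p)\pi_0$. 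Using property~1 of Corollary~\ref{lem-StateTransProb} (which makes $P_{i,0}(1)$ independent of $i$), I would write $\pi_0=P_{0,0}(1)\pi_0+P_{1,0}(1)\sum_{i\ge1}\pi_i$, replace $\sum_{i\ge1}\pi_i$ by $1/ET-\pi_0$, and solve for $\pi_0$; keeping $P_{0,0}(1)$ and $P_{1,0}(1)$ formally distinct is what makes the expression valid under both assumptions, since they coincide under Assumption~1 but differ under Assumption~2 (via Lemma~\ref{lem-Case2TransProb}). For $1\le\Delta\le t_{max}$ I would split the balance sum at $i=\Delta$, invoke property~1 to pull $P_{\Delta,\Delta}(1)$ out of the tail $\sum_{i\ge\Delta}\pi_i=1/ET-\sum_{i=0}^{\Delta-1}\pi_i$, and note that property~3 again annihilates the self-term at $\Delta=t_{max}$, reconciling the uniform formula with the balance equation there.

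The step I expect to require the most care is the closed form for $\Pi$. I would sum the balance equations over the tail $\Delta\ge\omega=t_{max}+1$, interchange the order of summation to get $\Pi=\sum_{i}\pi_i\sum_{\Delta\ge\omega}P_{i,\Delta}(1)$, and then identify which source states can reach the tail. Only the overshoot branch $\Delta'=\Delta+t$ leaks probability above $\omega$, so reindexing $\Delta'=t_{max}+k$ shows that a state $i$ with $1\le i\le\omega-1$ contributes exactly $\sum_{k=1}^{i}P_{i,t_{max}+k}(1)$, while states $i\ge\omega$ feed back through $\Pi$ itself by the shift-invariance of property~2 of Corollary~\ref{lem-StateTransProb}, producing the self-consistent term $\sum_{i=1}^{t_{max}}P_{t_{max}+1,t_{max}+1+i}(1)\,\Pi$. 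Isolating $\Pi$ then gives the stated ratio. The delicate points are the correct truncation of the summation ranges and the justification that property~2 legitimately folds the infinite tail of source states into a single multiple of $\Pi$; the remainder is routine algebra already licensed by Theorem~\ref{prop-StationaryDistribution}.
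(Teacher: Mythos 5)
Your proposal is correct and follows essentially the same route as the paper: specialize the balance equations and normalization to each threshold (re-deriving the $\tau=0$ case directly from \eqref{eq-CompactBalanceEq2} since Theorem~\ref{prop-StationaryDistribution} is stated for $0<\tau<\infty$), eliminate the infinite tail via \eqref{eq-TotalProbs}, merge the two $\Delta$-ranges using $P_{\Delta,\Delta}(1)=0$ for $\Delta\geq t_{max}$ (property~3), and obtain $\Pi$ by the tail-summation and shift-invariance argument already used in the proof of the theorem. Your observation that $P_{0,0}(1)$ and $P_{1,0}(1)$ must be kept distinct to cover \textbf{Assumption 2} is a correct and worthwhile refinement of the paper's presentation.
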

\begin{proof}
The calculations follow similar steps as detailed in the proof of Theorem \ref{prop-StationaryDistribution}. The complete proof can be found in Appendix \ref{pf-AoIISpecialCase1} of the supplementary material.
\end{proof}
We will calculate the expected AoII in the next subsection based on the above results.

\subsection{Expected AoII}
Let $\bar{\Delta}_{\tau}$ be the expected AoII achieved by threshold policy $\tau$. Then,
\begin{equation}\label{eq-expectedAoIIDet}
\bar{\Delta}_{\tau} = \sum_{\Delta=0}^{\tau-1}C(\Delta,0)\pi_\Delta + \sum_{\Delta=\tau}^{\infty}C(\Delta,1)\pi_\Delta,
\end{equation}
where $C(\Delta,a)$ is the expected sum of AoII during the transmission of the update caused by the operation of $a$ at state $s=(\Delta,0,-1)$. Note that $C(\Delta,a)$ includes the AoII for being at state $s=(\Delta,0,-1)$.
\begin{remark}\label{rem-CandP}
In order to have a more intuitive understanding of the definition of  $C(\Delta,a)$, we use $\eta$ to denote a possible path of the state during the transmission of the update and let $H$ be the set of all possible paths. Moreover, we denote by $C_{\eta}$ and $P_{\eta}$ the sum of AoII and the probability associated with path $\eta$, respectively. Then,
\[
C(\Delta,a) = \sum_{\eta\in H}P_{\eta}C_{\eta}.
\]
For example, we consider the case of $p_2=1$, where the transmission takes $2$ time slots to be delivered. Also, action $a=1$ is taken at state $(2,0,-1)$. Then, a sample path $\eta$ of the state during the transmission can be the following.
\[
(2,0,-1)\rightarrow(3,1,1)\rightarrow(4,0,-1).
\]
By our definition, $C_{\eta}=2+3=5$ and $P_{\eta} = Pr[(3,1,1)\mid (2,0,-1),a=1]\cdot Pr[(4,0,-1)\mid (3,1,1),a=1]$ for the above sample path.
\end{remark}
In the following, we calculate $C(\Delta,a)$. Similar to Section \ref{sec-StateTransProb}, we define $C^{t}(\Delta,a)$ as the expected sum of AoII during the transmission of the update caused by action $a$ at state $s=(\Delta,0,-1)$, given that the transmission takes $t$ time slots. Then, under \textbf{Assumption 1},
\begin{equation}\label{eq-CompactCostRan}
C(\Delta,a) = \begin{dcases}
\Delta & a=0,\\
\sum_{t=1}^{t_{max}}p_tC^{t}(\Delta,1) & a=1,
\end{dcases}
\end{equation}
and, under \textbf{Assumption 2},
\begin{equation}\label{eq-CompactCostRan2}
C(\Delta,a) = \begin{dcases}
\Delta & a=0,\\
\sum_{t=1}^{t_{max}}p_tC^{t}(\Delta,1) + p_{t^+}C^{t_{max}}(\Delta,1) & a=1.
\end{dcases}
\end{equation}
Hence, obtaining the expressions of $C^{t}(\Delta,1)$ is sufficient. To this end, we define $C^k(\Delta)$ as the expected AoII $k$ time slots after the transmission starts at state $s=(\Delta,0,-1)$, given that the transmission is still in progress. Then, we have the following lemma.
\begin{lemma}\label{lem-CompactCost}
$C^{t}(\Delta,1)$ is given by
\[
C^{t}(\Delta,1) = \sum_{k=0}^{t-1}C^k(\Delta),
\]
where $C^k(\Delta)$ is given by \eqref{eq-Cost}.
\begin{figure*}[!t]
\normalsize
\begin{equation}\label{eq-Cost}
C^k(\Delta) = \begin{dcases}
\sum_{h=1}^{k} hp^{(k-h)}p(1-p)^{h-1} & \Delta=0,\\
\sum_{h=1}^{k-1} h(1-p^{(k-h)})p(1-p)^{h-1} + (\Delta+k)(1-p)^k & \Delta>0.
\end{dcases}
\end{equation}
\hrulefill
\vspace*{4pt}
\end{figure*}
\end{lemma}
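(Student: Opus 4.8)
The plan is to separate the two assertions: the additive decomposition $C^{t}(\Delta,1)=\sum_{k=0}^{t-1}C^k(\Delta)$ and the closed form \eqref{eq-Cost} for each $C^k(\Delta)$. For the decomposition, the key observation is that during a transmission the receiver's estimate does not change until the update is delivered, so while the transmission is in progress the evolution of $\Delta$ is governed only by the source Markov chain through the two ``$\hat X_k=\hat X_{k+1}$'' cases of \eqref{eq-AoIIWithSource}, independently of the realized transmission length $t$. Consequently, conditioned on the transmission still being ongoing, the distribution of the AoII $k$ slots after the start depends only on $\Delta$ and $k$, which is exactly what makes $C^k(\Delta)$ well defined. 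Given that the transmission lasts exactly $t$ slots, the update is in progress at the slots $0,1,\dots,t-1$ (the slot of the decision being $k=0$), so by linearity of expectation the expected sum of AoII over these slots equals $\sum_{k=0}^{t-1}C^k(\Delta)$.

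For the closed form, I would compute $C^k(\Delta)=\mathbb{E}[\Delta_k\mid \text{transmission ongoing}]$ by summing over the possible values $h$ of $\Delta_k$. Because the estimate is frozen at its initial value $\hat X_0$ throughout, the event $\{\Delta_k=h\}$ for $h\ge1$ is precisely the event that the source last agreed with $\hat X_0$ at slot $k-h$ and disagreed at every slot in $\{k-h+1,\dots,k\}$; that is, $X_{k-h}=\hat X_0$ and $X_j\ne \hat X_0$ for $k-h<j\le k$. Using the Markov property, this probability factors into the chance of being in the matching state at slot $k-h$, a single flip out of it (probability $p$), and $h-1$ subsequent self-transitions in the mismatching state (probability $(1-p)^{h-1}$); multiplying by $h$ and summing gives $C^k(\Delta)$.

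The two cases in \eqref{eq-Cost} then come from the initial condition. When $\Delta=0$ the source starts in the matching state, so $\Pr[X_{k-h}=\hat X_0]=p^{(k-h)}$ and $h$ ranges over $1\le h\le k$, giving $C^k(0)=\sum_{h=1}^{k}h\,p^{(k-h)}p(1-p)^{h-1}$. When $\Delta>0$ the source starts in the mismatching state, so $\Pr[X_{k-h}=\hat X_0]=1-p^{(k-h)}$; here the value $h=k$ is impossible (it would force a match at slot $0$), so the matched-and-incremented contribution runs only to $h=k-1$, and one must add separately the ``never matched'' event $\{X_j\ne\hat X_0\ \forall\,0\le j\le k\}$, which has probability $(1-p)^k$ and yields AoII $\Delta+k$. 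This produces the second line of \eqref{eq-Cost}, and one checks the consistency case $C^0(\Delta)=\Delta$ from the empty sum plus $(\Delta+0)(1-p)^0$.

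I expect the main obstacle to be the careful bookkeeping of this final step: recognizing that, unlike the $\Delta=0$ case, for $\Delta>0$ the top value $h=k$ must be excised from the summation and replaced by the explicit no-reset term $(\Delta+k)(1-p)^k$, and verifying that the factor $1-p^{(k-h)}$ (rather than $p^{(k-h)}$) is the correct probability of returning to the matching state from the mismatching start. Translating the informal AoII dynamics of \eqref{eq-AoIIWithSource} into the clean ``last-agreement-at-slot-$(k-h)$'' description of $\{\Delta_k=h\}$, and justifying the Markovian factorization of its probability, is where the real content lies; the remaining algebra is routine.
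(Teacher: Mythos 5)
Your proposal is correct and follows essentially the same route as the paper: derive the distribution of the AoII $k$ slots into an ongoing transmission by identifying the "last agreement at slot $k-h$" event (with the frozen estimate), split into the $\Delta=0$ and $\Delta>0$ initial conditions, treat the never-matched term $(\Delta+k)(1-p)^k$ separately, and sum over $k=0,\dots,t-1$ by linearity of expectation. Your explicit justification of the decomposition via the independence of $T$ from the source is a detail the paper leaves implicit, but the substance is identical.
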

\begin{proof}
The expression of $C^k(\Delta)$ is obtained by analyzing the system dynamics. The complete proof can be found in Appendix \ref{pf-ExpectedCost} of the supplementary material.
\end{proof}
Next, we calculate the expected AoII achieved by the threshold policy. We start with the case of $\tau=\infty$.
\begin{theorem}\label{prop-LazyPerformance}
The expected AoII achieved by the threshold policy with $\tau=\infty$ is
\[
\bar{\Delta}_\infty = \frac{1}{2p}.
\]
\end{theorem}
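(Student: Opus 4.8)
The plan is to exploit the fact that when $\tau=\infty$ the transmitter never sends anything, so the receiver's estimate is frozen: $\hat{X}_k=\hat{X}_{k+1}$ holds in every time slot. This collapses the full dynamics \eqref{eq-AoIIWithSource} to only the two branches with $\hat{X}_k=\hat{X}_{k+1}$, so that $\Delta_k$ evolves as a one-dimensional Markov chain on $\{0,1,2,\dots\}$. First I would record its transition probabilities directly from \eqref{eq-AoIIWithSource}: from state $0$ the chain stays at $0$ with probability $1-p$ and moves to $1$ with probability $p$; from any state $\Delta>0$ it resets to $0$ with probability $p$ and increments to $\Delta+1$ with probability $1-p$. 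Equivalently, these can be read off from $P_{\Delta,\Delta'}(0)$ in Section \ref{sec-StateTransProb}, since the action is $a=0$ in every slot.

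Next I would solve for the stationary distribution $\{\pi_\Delta\}$ from the balance equations. The balance equation at state $0$ reads $\pi_0=(1-p)\pi_0+p\sum_{\Delta\geq1}\pi_\Delta$; using $\sum_{\Delta\geq1}\pi_\Delta=1-\pi_0$ immediately gives $\pi_0=\tfrac12$. Because state $\Delta$ (for $\Delta\geq2$) is reachable only from state $\Delta-1$ via the increment transition, and state $1$ only from state $0$, the remaining balance equations telescope to $\pi_1=p\pi_0$ and $\pi_\Delta=(1-p)\pi_{\Delta-1}$ for $\Delta\geq2$. Solving this recursion yields the geometric form $\pi_\Delta=\tfrac{p}{2}(1-p)^{\Delta-1}$ for $\Delta\geq1$, and I would check the normalization $\pi_0+\sum_{\Delta\geq1}\pi_\Delta=1$ as a consistency test.

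Finally I would compute $\bar\Delta_\infty=\sum_{\Delta\geq0}\Delta\,\pi_\Delta=\tfrac{p}{2}\sum_{\Delta\geq1}\Delta(1-p)^{\Delta-1}$ and evaluate the sum with the standard identity $\sum_{\Delta\geq1}\Delta x^{\Delta-1}=(1-x)^{-2}$ at $x=1-p$, giving $\tfrac{p}{2}\cdot p^{-2}=\tfrac{1}{2p}$, as claimed.

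There is no serious obstacle here; the only points requiring care are the verification that the estimate really is frozen for all $k$, so that exactly the $\hat{X}_k=\hat{X}_{k+1}$ branches of \eqref{eq-AoIIWithSource} govern the transitions, with their probabilities inherited from the symmetric source, and the use of the normalization condition to close the state-$0$ balance equation, since the chain has infinitely many states and the finite linear system of Theorem \ref{prop-StationaryDistribution} was derived only for $0<\tau<\infty$.
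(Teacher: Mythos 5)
Your proposal is correct and follows essentially the same route as the paper's own proof: freeze the estimate, write the balance equations of the resulting birth--reset chain on $\{0,1,2,\dots\}$, use normalization to get $\pi_0=\tfrac12$ and the geometric tail $\pi_\Delta=\tfrac{p}{2}(1-p)^{\Delta-1}$, then sum $\sum_\Delta \Delta\pi_\Delta=\tfrac{1}{2p}$. The only cosmetic difference is that the paper fixes $\hat{X}_k=0$ without loss of generality rather than arguing directly from the $\hat{X}_k=\hat{X}_{k+1}$ branches of \eqref{eq-AoIIWithSource}.
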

\begin{proof}
In this case, the transmitter will never initiate any transmissions. Hence, the state transitions are straightforward. The complete proof can be found in Appendix \ref{pf-LazyPerformance} of the supplementary material.
\end{proof}
In the following, we focus on the case where $\tau$ is finite. We recall that the expected AoII is given by \eqref{eq-expectedAoIIDet}. The problem arises because of the infinite sum. To overcome this, we adopt a similar approach as proposed in Section \ref{sec-SteadyState}. More precisely, we leverage the structural property of the threshold policy and define $\Sigma\triangleq \sum_{\Delta=\omega}^{\infty}C(\Delta,1)\pi_\Delta$. Then, equation \eqref{eq-expectedAoIIDet} can be written as
\[
\bar{\Delta}_{\tau} = \sum_{i=0}^{\tau-1}C(i,0)\pi_i + \sum_{i=\tau}^{\omega-1}C(i,1)\pi_i + \Sigma.
\]
As we have obtained the expressions of $\pi_\Delta$ and $C(\Delta,a)$ in previous subsections, it is sufficient to obtain the expression of $\Sigma$.
\begin{theorem}\label{prop-Performance}
Under \textbf{Assumption 1} and for $0\leq\tau<\infty$,
\[
\Sigma = \ddfrac{\sum_{t=1}^{t_{max}}\left[p_tP^t_{1,1+t}(1)\left(\sum_{i=\omega-t}^{\omega-1}C(i,1)\pi_i\right) + \Delta_t'\Pi_t\right]}{1-\sum_{t=1}^{t_{max}}\bigg(p_tP^t_{1,1+t}(1)\bigg)},
\]
where
\[
\Pi_t= p_{t}P^{t}_{1,1+t}(1)\left(\sum_{i=\omega-t}^{\omega-1}\pi_i + \Pi\right),
\]
\[
\Delta_t' = \sum_{i=1}^{t_{max}}p_i\left(\frac{t-t(1-p)^i}{p}\right).
\]
\end{theorem}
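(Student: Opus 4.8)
The plan is to exploit two structural facts and then close a single scalar fixed-point equation for $\Sigma$.

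First I would record that the per-transmission cost is \emph{affine} in the starting age. From Lemma~\ref{lem-CompactCost}, for $\Delta>0$ the term $C^k(\Delta)$ equals a quantity independent of $\Delta$ plus $(\Delta+k)(1-p)^k$, so the coefficient of $\Delta$ in $C^k(\Delta)$ is exactly $(1-p)^k$. Summing over $k$ and then averaging over $t$ with weights $p_t$ as in \eqref{eq-CompactCostRan} yields $C(\Delta,1)=\bar A+\bar B\,\Delta$ for $\Delta>0$, where $\bar B=\sum_{t=1}^{t_{max}}p_t\frac{1-(1-p)^t}{p}$ and $\bar A$ collects the $\Delta$-free part (only the difference will matter, so $\bar A$ never needs to be computed). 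In particular $C(j+t,1)-C(j,1)=t\bar B$, and a direct comparison with the statement shows $t\bar B=\Delta_t'$; this identifies $\Delta_t'$ as the expected cost increment produced by shifting the initial age up by $t$.

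Second I would establish a clean balance equation for the tail. For any $\Delta\ge\omega=t_{max}+\tau+1$, Properties~2 and 3 of Corollary~\ref{lem-StateTransProb} force $P_{\Delta',\Delta}(1)=0$ unless $\Delta'=\Delta-t$ for some $1\le t\le t_{max}$, and in that case $P_{\Delta-t,\Delta}(1)=p_tP^t_{1,1+t}(1)$, independent of $\Delta$; write $q_t\triangleq p_tP^t_{1,1+t}(1)$. Every such predecessor $\Delta'=\Delta-t\ge\tau+1>\tau$ lies in the transmit region, so its induced action is $a=1$, while any $a=0$ move from an age $\Delta'<\tau$ lands at age $0$ or $\Delta'+1\le\tau<\omega$ and thus cannot reach the tail. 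Hence \eqref{eq-CompactBalanceEq2} reduces, for $\Delta\ge\omega$, to $\pi_\Delta=\sum_{t=1}^{t_{max}}q_t\,\pi_{\Delta-t}$.

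Then I would substitute this tail recursion into $\Sigma=\sum_{\Delta\ge\omega}C(\Delta,1)\pi_\Delta$, exchange the order of summation, and reindex by $j=\Delta-t$, giving $\Sigma=\sum_{t=1}^{t_{max}}q_t\sum_{j\ge\omega-t}C(j+t,1)\pi_j$. Applying the affine identity $C(j+t,1)=C(j,1)+\Delta_t'$ and splitting each inner sum at $\omega$ separates a finite boundary block $\sum_{i=\omega-t}^{\omega-1}(\cdot)\pi_i$ from the tail blocks $\sum_{j\ge\omega}C(j,1)\pi_j=\Sigma$ and $\sum_{j\ge\omega}\pi_j=\Pi$. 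Recognizing $\Pi_t=q_t\big(\sum_{i=\omega-t}^{\omega-1}\pi_i+\Pi\big)$, this yields $\Sigma=\sum_{t=1}^{t_{max}}\big[q_t\sum_{i=\omega-t}^{\omega-1}C(i,1)\pi_i+\Delta_t'\Pi_t\big]+\big(\sum_{t=1}^{t_{max}}q_t\big)\Sigma$; collecting the $\Sigma$ terms and dividing by $1-\sum_t q_t$ gives the claimed formula. The main obstacle is getting the tail balance exactly right: I must verify that for $\Delta\ge\omega$ the only contributing predecessors are $\Delta-1,\dots,\Delta-t_{max}$, that each already sits in the transmit region so the shift-invariant overflow probability $q_t$ applies uniformly (Property~2 of Corollary~\ref{lem-StateTransProb}), and that the affine cost formula stays valid down to $j=\omega-t_{max}=\tau+1>0$, keeping us on the $\Delta>0$ branch of \eqref{eq-Cost}. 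Once these are in hand, exchanging the sums and splitting at $\omega$ is routine bookkeeping, and the geometric factor $1-\sum_t q_t<1$ guarantees the fixed-point equation is well posed.
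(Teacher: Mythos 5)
Your proposal is correct and follows essentially the same route as the paper's proof: the tail recursion $\pi_\Delta=\sum_t q_t\pi_{\Delta-t}$ for $\Delta\geq\omega$ from the shift-invariance of $P_{\Delta,\Delta'}(1)$, the constant increment $\Delta_t'=C(\Delta,1)-C(\Delta-t,1)$, and the resulting scalar fixed-point equation for $\Sigma$. The only cosmetic difference is that you extract $\Delta_t'=t\bar B$ from the affine structure of $C(\cdot,1)$, whereas the paper subtracts the explicit expressions for $C^i(\Delta,1)$ and $C^i(\Delta-t,1)$ term by term; the computations are identical.
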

\begin{proof}
We delve into the definition of $\Sigma$ and repeatedly use the properties of $C(\Delta,a)$ and $P_{\Delta,\Delta'}(a)$. The complete proof can be found in Appendix \ref{pf-Performance} of the supplementary material.
\end{proof}
\begin{theorem}\label{prop-ThresholdPerformance}
Under \textbf{Assumption 2} and for $0\leq\tau<\infty$,
\[
\Sigma = \ddfrac{\sum_{t=1}^{t_{max}}\left[\left(\sum_{i=\omega-t}^{\omega-1}\Upsilon(i+t,t)C(i,1)\pi_i\right) + \Delta_t'\Pi_t\right]}{1-\sum_{t=1}^{t_{max}}\Upsilon(\omega+t,t)},
\]
where
\[
\Upsilon(\Delta,t)= p_tP^t_{\Delta-t,\Delta}(1) + p_{t^+}P^{t^+}_{\Delta-t,\Delta}(1),
\]
\[
\Pi_t = \sum_{i=\omega-t}^{\omega-1}\Upsilon(i+t,t)\pi_i + \Upsilon(\omega+t,t)\Pi,
\]
\[
\Delta_t' = \sum_{i=1}^{t_{max}}p_i\left(\frac{t-t(1-p)^i}{p}\right) + p_{t^+}\left(\frac{t-t(1-p)^{t_{max}}}{p}\right).
\]
\end{theorem}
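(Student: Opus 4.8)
The plan is to derive a self-referential linear equation for $\Sigma$ by exploiting the same two structural facts used for Theorem~\ref{prop-Performance}, adapted to the terminated-transmission branch $p_{t^+}$ of Assumption~2. The first fact is that the tail states feed into one another only through upward ``staircase'' jumps. For $\Delta\geq\omega=t_{max}+\tau+1$, property~3 of Lemma~\ref{lem-Case2TransProb}, read for a transition into $\Delta$, forces every predecessor $\Delta'$ with $P_{\Delta',\Delta}(1)>0$ to obey $\Delta-t_{max}\leq\Delta'\leq\Delta-1$: the lower bound negates $\Delta>\Delta'+t_{max}$, and the upper bound comes from the forbidden band, which is active since $\Delta>t_{max}-1$. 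Because each such $\Delta'\geq\omega-t_{max}=\tau+1>\tau$, the action at every predecessor is $a=1$, and the third case of \eqref{eq-EquivalentEq2} identifies the incoming probability of a size-$t$ jump as exactly $P_{\Delta-t,\Delta}(1)=\Upsilon(\Delta,t)$. The balance equation therefore collapses to $\pi_\Delta=\sum_{t=1}^{t_{max}}\Upsilon(\Delta,t)\pi_{\Delta-t}$ for $\Delta\geq\omega$.

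The second fact is that $C(\Delta,1)$ is affine in $\Delta$ on $\Delta>0$, which I would read directly off \eqref{eq-Cost}: for $\Delta>0$ the only $\Delta$-dependence of $C^{k}(\Delta)$ sits in $(\Delta+k)(1-p)^{k}$, so $C^{t}(\Delta,1)=\sum_{k=0}^{t-1}C^{k}(\Delta)$ has slope $\sum_{k=0}^{t-1}(1-p)^{k}=\frac{1-(1-p)^{t}}{p}$. Substituting into \eqref{eq-CompactCostRan2} shows that the increment produced by an upward jump of size $t$ is $C(\Delta+t,1)-C(\Delta,1)=\Delta_t'$ for every $\Delta>0$, with $\Delta_t'$ exactly the quantity in the statement (the $p_{t^+}$ summand recording the contribution of terminated transmissions). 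Since every predecessor index met below is at least $\tau+1\geq1$, this affine identity applies throughout.

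With these two facts I would substitute the tail balance equation into $\Sigma=\sum_{\Delta=\omega}^{\infty}C(\Delta,1)\pi_\Delta$, interchange the finite and infinite sums, and reindex by $j=\Delta-t$ to reach $\Sigma=\sum_{t=1}^{t_{max}}\sum_{j=\omega-t}^{\infty}C(j+t,1)\Upsilon(j+t,t)\pi_j$. Writing $C(j+t,1)=C(j,1)+\Delta_t'$ splits each inner sum into a cost-carrying part $\sum_j C(j,1)\Upsilon(j+t,t)\pi_j$ and an increment part $\Delta_t'\sum_j\Upsilon(j+t,t)\pi_j$; breaking the latter at $j=\omega$ and using translation invariance (property~2 of Lemma~\ref{lem-Case2TransProb}) to set $\Upsilon(j+t,t)=\Upsilon(\omega+t,t)$ on the bulk $j\geq\omega$, together with $\sum_{j\geq\omega}\pi_j=\Pi$, identifies it as $\Pi_t$. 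The same boundary/bulk split applied to the cost-carrying part isolates a boundary sum over $\omega-t\leq j\leq\omega-1$ and a bulk term $\Upsilon(\omega+t,t)\sum_{j\geq\omega}C(j,1)\pi_j=\Upsilon(\omega+t,t)\,\Sigma$. Collecting the $\Sigma$ terms on the left and dividing by $1-\sum_{t=1}^{t_{max}}\Upsilon(\omega+t,t)$ then produces the claimed closed form.

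I expect the main obstacle to be careful boundary bookkeeping rather than any deep idea: one must check that the staircase characterization of predecessors holds exactly at the smallest tail index $\Delta=\omega$, so that no low-$\Delta$ reset transition leaks into the tail and the boundary sums begin precisely at $\omega-t$, and that translation invariance genuinely covers the terminated-transmission component of $\Upsilon(\cdot,t)$ for all $j\geq\omega$. Well-posedness also deserves a remark: the series defining $\Sigma$ and $\Pi$ converge by positive recurrence of the chain, and the denominator $1-\sum_{t=1}^{t_{max}}\Upsilon(\omega+t,t)$ is strictly positive because $\sum_t\Upsilon(\omega+t,t)$ is the total probability of an upward jump in the tail, which is below one whenever a correct delivery has positive probability. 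The argument runs parallel to the proof of Theorem~\ref{prop-Performance}, with $p_tP^t_{1,1+t}(1)$ replaced by $\Upsilon(\cdot,t)$ so as to absorb the extra $p_{t^+}$ mass.
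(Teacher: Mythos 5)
Your proposal is correct and follows essentially the same route as the paper's proof: the tail balance equation $\pi_\Delta=\sum_{t}\Upsilon(\Delta,t)\pi_{\Delta-t}$ for $\Delta\geq\omega$, the constant increment $\Delta_t'$ from the affine form of $C(\Delta,1)$, translation invariance of $\Upsilon(\cdot,t)$ on the tail, and the boundary/bulk split yielding a linear equation in $\Sigma$. The only difference is presentational — the paper introduces intermediate quantities $\pi_{\Delta,t}$ and $\Sigma_t$ before summing over $t$, whereas you substitute directly into $\Sigma$ — and your added remarks on convergence and the nonvanishing denominator are sound.
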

\begin{proof}
The proof is similar to that of Theorem \ref{prop-Performance}. The complete proof can be found in Appendix \ref{pf-ThresholdPerformance} of the supplementary material.
\end{proof}

\section{Numerical Results}\label{sec-Numerical}
We lay out the numerical results in Fig.~\ref{fig-Numerical}.
\begin{figure*}%
\centering
\begin{subfigure}{2in}
\centering
\includegraphics[width=2in]{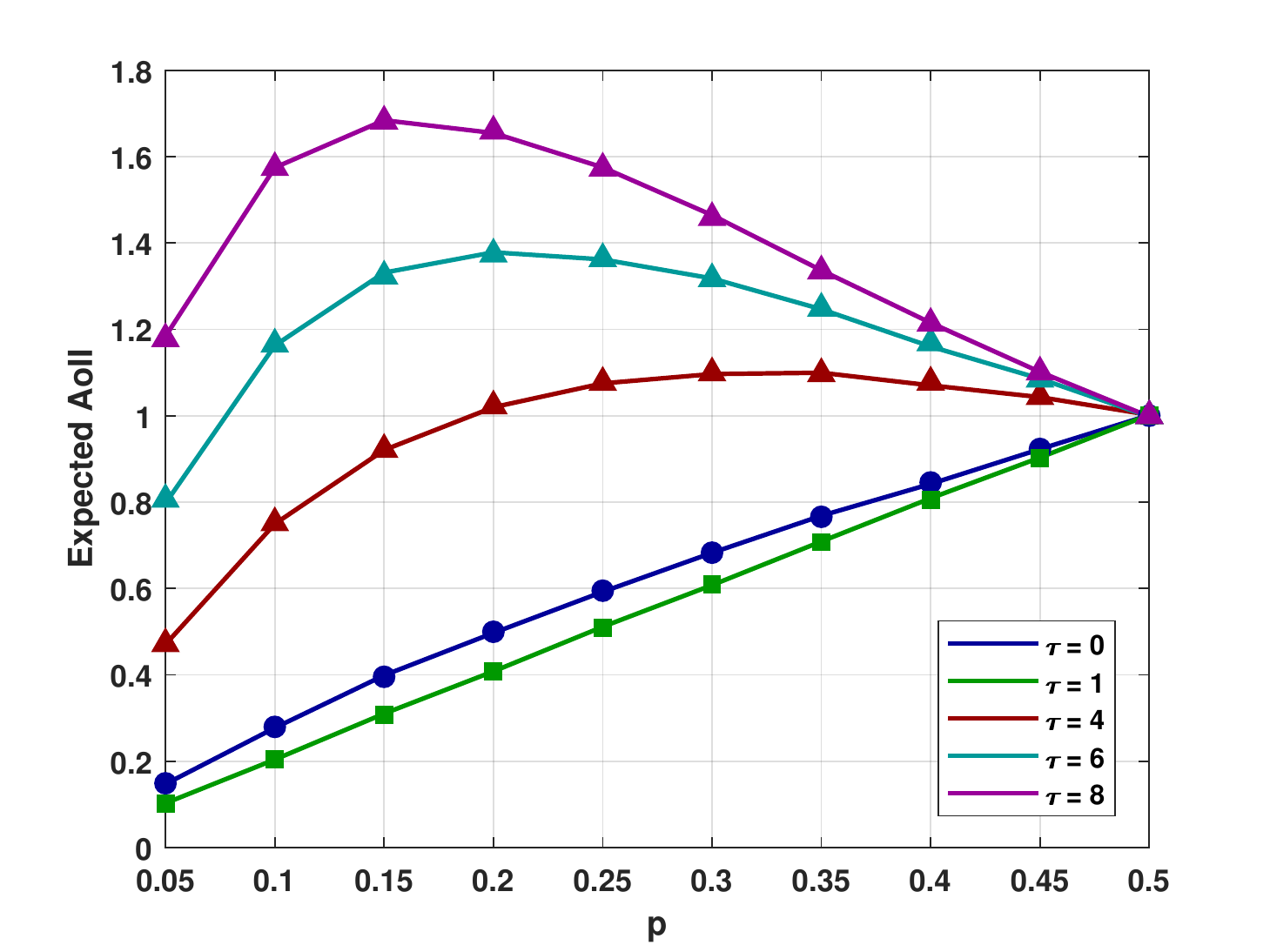}%
\caption{Performance under \textbf{Assumption 1} and Geometric distribution.}%
\label{fig-Assumption1Geo}%
\end{subfigure}\hfill%
\begin{subfigure}{2in}
\centering
\includegraphics[width=2in]{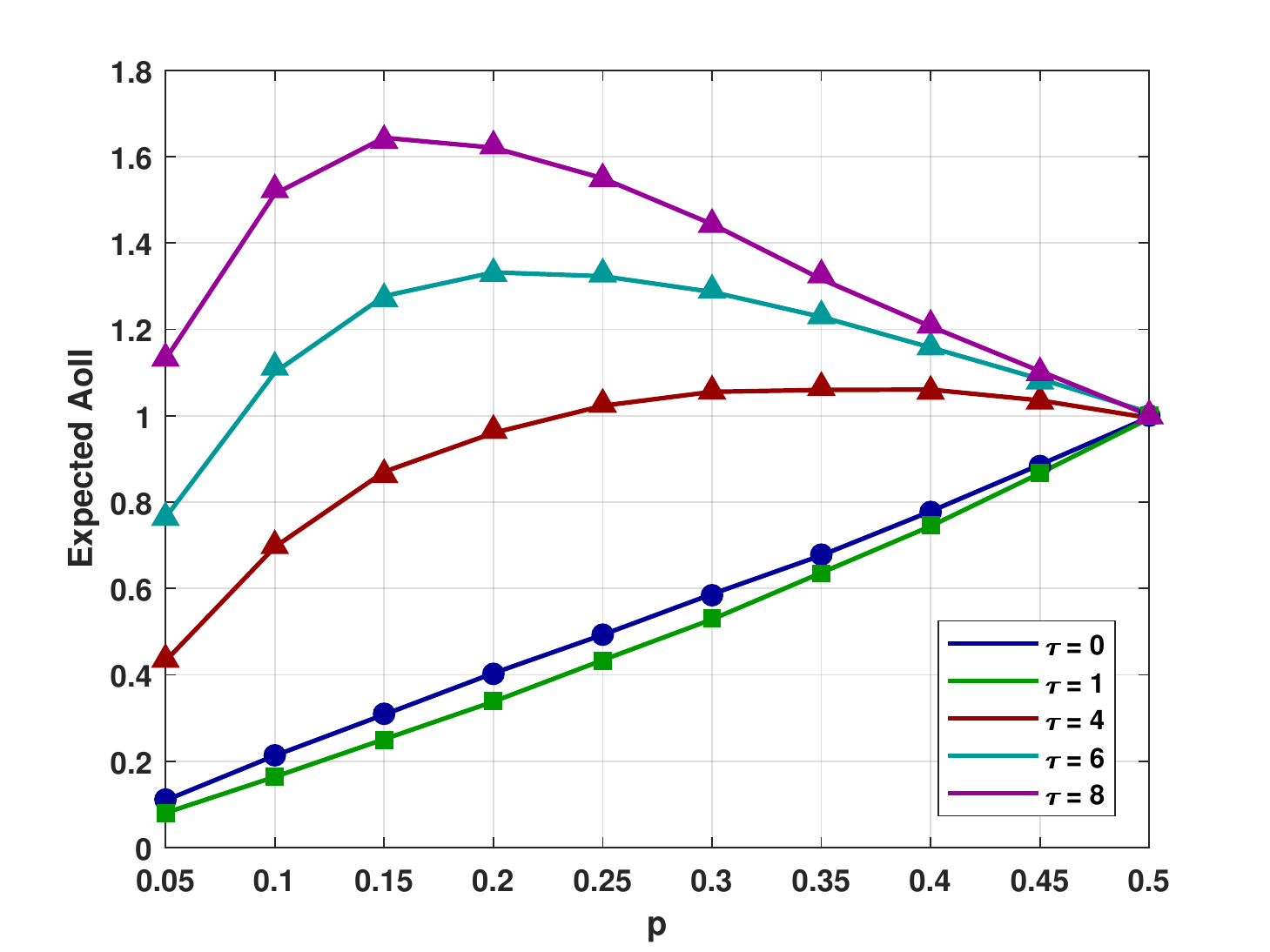}%
\caption{Performance under \textbf{Assumption 1} and Zipf distribution.}%
\label{fig-Assumption1Zipf}%
\end{subfigure}\hfill%
\begin{subfigure}{2in}
\centering
\includegraphics[width=2in]{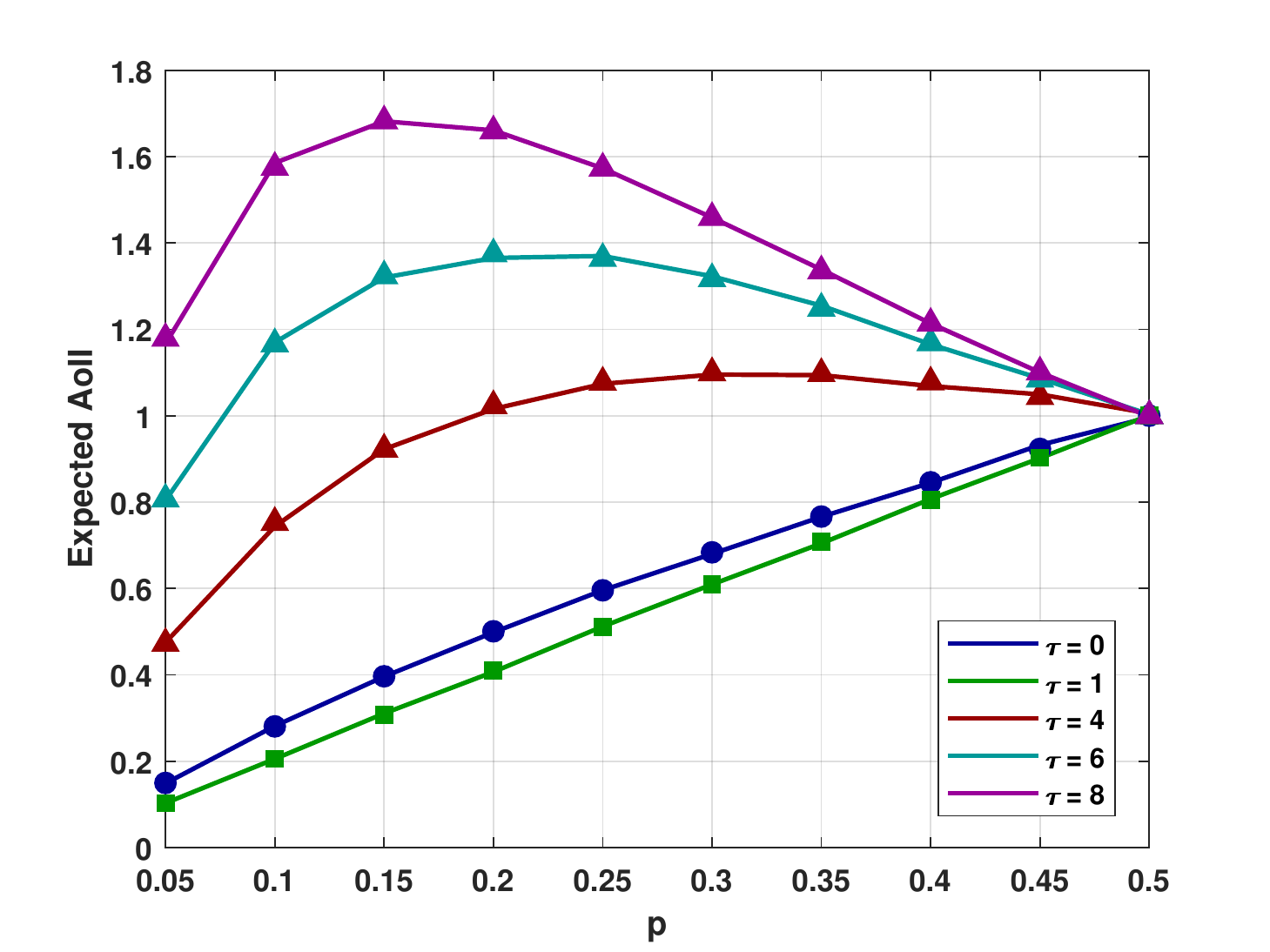}%
\caption{Performance under \textbf{Assumption 2} and Geometric distribution.}%
\label{fig-Assumption2Geo}%
\end{subfigure}%
\caption{Illustrations of the expected AoII in the function of $p$ and $\tau$. In the figure, lines represent the simulation results, and markers represent the calculation results. We set the upper limit on the transmission time $t_{max}=5$, the success probability in Geometric distribution $p_s = 0.7$, and the constant in Zipf distribution $a=3$. The simulation results are the average of $15$ runs, each run containing $25000$ epochs.}
\label{fig-Numerical}
\end{figure*}
\begin{itemize}
\item Fig.~\ref{fig-Assumption1Geo} considers the system under \textbf{Assumption 1} with $t_{max}=5$ and Geometric transmission delay with success probability $p_s=0.7$. Specifically, $p_t = (1-p_s)^{t-1}p_s$. For each considered threshold $\tau$, we vary the value of $p$ and plot the expected AoIIs obtained through numerical simulation and calculated using Section \ref{sec-MDP}.
\item Fig.~\ref{fig-Assumption1Zipf} considers the same system as the one considered in Fig.~\ref{fig-Assumption1Geo}, except this time, the transmission delay follows a Zipf distribution with the constant $a=3$. More precisely, $p_t = \frac{t^{-a}}{\sum_{i=1}^{t_{max}}i^{-a}},\ 1\leq t\leq t_{max}$. Zipf distribution is commonly used in linguistics and is also studied in the context of AoII~\cite{b6} and AoI~\cite{b11}.
\item Fig.~\ref{fig-Assumption2Geo} considers the same system as the one considered in Fig.~\ref{fig-Assumption1Geo}, except this time, the system adopts \textbf{Assumption 2} with $t_{max}=5$.
\end{itemize}
We can confirm the correctness of our theoretical analysis and calculations based on the results shown in the figure. At the same time, we can conclude from the figure that the threshold policy with $\tau=0$ is not optimal. That is, when AoII is zero, transmitting new updates is harmful to the system. One reason is that updates transmitted when AoII is zero do not provide new information to the receiver. Meanwhile, the transmission will occupy the channel for a few time slots. Therefore, such action deprives the transmitter of the ability to send new updates for the next few time slots without providing the receiver with any new information.

\section{Conclusion}
In this paper, we study the AoII in a communication system where the communication channel has a random delay. To ease the theoretical analysis, we make two independent assumptions. Under the first assumption, the transmission time is capped, and the update will always be delivered. Under the second assumption, the system will terminate the transmission when its duration exceeds a specific value. For the remote receiver to know the dynamic source well, the transmitter will choose when to send updates. This paper considers the threshold policy, under which the transmitter initiates transmission only when the AoII exceeds the threshold. Leveraging the notion of the Markov chain, we analyze the system dynamics and calculate the expected AoII achieved by the threshold policy precisely.

Finally, by analyzing the calculation results, we find that the threshold policy with $\tau=0$ is not optimal. In other words, to achieve the best performance, the transmitter should selectively transmit updates. Aware of this phenomenon, we naturally ask what the optimal policy is. Also, we remember that one of the reasons for this phenomenon is that the transmitter does not have the ability to preempt. Consequently, the transmitter can only wait until the current transmission is completed before initiating a new one. Therefore, we ask what is the optimal policy if the transmitter is preemptive. These two questions will be the focus of our subsequent work. We also plan to extend the methodology demonstrated in this paper to more generic system setups, such as more complex source processes, more general AoIIs, and systems where multiple transmitter-receiver pairs exist.

\bibliographystyle{IEEEtran}
\bibliography{mybib}

\newpage

\setcounter{page}{1}

\twocolumn[
\begin{@twocolumnfalse}
\begin{center}
\Huge Supplementary Material for the Paper "Analysis of Age of Incorrect Information under Generic Transmission Delay"
\end{center}
\end{@twocolumnfalse}
\vspace{3em}]

\appendices
\section{Proof of Lemma \ref{lem-CompactTrans}}\label{pf-CompactTrans}
We recall that $P^t_{\Delta,\Delta'}(1)$ is the probability that action $a$ at state $s=(\Delta,0,-1)$ will lead to state $s'=(\Delta',0,-1)$, given that the transmission takes $t$ time slots. With this in mind, we first distinguish between different values of $\Delta$.
\begin{itemize}
\item When $\Delta=0$, the transmitted update is the same as the receiver's estimate. Hence, the receiver's estimate will not change due to receiving the transmitted update. Moreover, as indicated by \eqref{eq-AoIIDynamic}, AoII will either increases by one or decreases to zero. Hence, $\Delta'\in\{0,1,...,t\}$. Then, we further distinguish our discussion into the following cases.
\begin{itemize}
\item $\Delta'=0$ happens when the receiver's estimate is correct as a result of receiving the update. Hence, the probability of this happening is $p^{(t)}$.
\item $\Delta'=k\in\{1,...,t\}$ happens when the receiver's estimate is correct at $(t-k)$th time slot after the transmission, which happens with probability $p^{(t-k)}$. Then, the estimate remains incorrect for the remainder of the transmission time. This happens when the source first changes state, then remains in the same state throughout the rest of the transmission. Hence, the probability of this happening is $p(1-p)^{k-1}$. Combining together, $\Delta'=k$ happens with probability $p^{(t-k)}p(1-p)^{k-1}$.
\end{itemize}
Combining together, we have
\[
P^{t}_{0,\Delta'}(1) = 
\begin{dcases}
p^{(t)} & \Delta'=0,\\
p^{(t-k)}p(1-p)^{k-1} & 1\leq\Delta'= k\leq t,\\
0 & otherwise.
\end{dcases}
\]
\item When $\Delta>0$, the transmitted update is different from the receiver's estimate. Hence, the receiver's estimate will flip as a result of receiving the transmitted update. According to \eqref{eq-AoIIDynamic}, we know $\Delta'\in\{0,1,...,t-1,\Delta+t\}$. Hence, we further distinguish between the following cases.
\begin{itemize}
\item $\Delta'=0$ happens in the same case as discussed in the case of $\Delta=0$. Hence, the estimate is correct with probability $p^{(t)}$.
\item $\Delta'=1$ happens when the estimate is correct at $(t-1)$th time slot after the transmission, which happens with probability $1-p^{(t-1)}$. Then, the estimate becomes incorrect as a result of receiving the update. Since the estimate flips upon the arrival of the transmitted update, it happens when the source remains in the same state. Hence, the probability of this happening is $1-p$. Combing together, $\Delta'=1$ happens with probability $(1-p^{(t-1)})(1-p)$.
\item $\Delta'=k\in\{2,...,t-1\}$ happens when the estimate is correct at $(t-k)$th time slot after the transmission, which happens with probability $1-p^{(t-k)}$. Then, the estimate remains incorrect for the remainder of the transmission time. This happens when the dynamic source behaves the following way during the remaining transmission time. The dynamic source should first change state, then remain in the same state, and finally, change state again when the update arrives. This happens with probability $p^2(1-p)^{k-2}$. Hence, $\Delta'=k$ happens with probability $(1-p^{(t-k)})p^{2}(1-p)^{k-2}$.
\item $\Delta'=\Delta+t$ happens when the estimate is incorrect throughout the transmission. Since the estimate will flip when the update is received, this happens when the source stays in the same state until the update arrives. Hence, $\Delta'=\Delta+t$ happens with probability $p(1-p)^{t-1}$.
\end{itemize}
Combining together, for $\Delta>0$, we have
\begin{multline*}
P^{t}_{\Delta,\Delta'}(1) = \\
\begin{dcases}
p^{(t)} & \Delta'=0,\\
(1-p^{(t-1)})(1-p) & \Delta'=1,\\
(1-p^{(t-k)})p^{2}(1-p)^{k-2} & 2\leq\Delta'=k\leq t-1,\\
p(1-p)^{t-1} & \Delta'=\Delta+t,\\
0 & otherwise.
\end{dcases}
\end{multline*}
\end{itemize}

\section{Proof of Corollary \ref{lem-StateTransProb}}\label{pf-StateTransProb}
We first lay out some properties of $P^{t}_{\Delta,\Delta'}(1)$ under \textbf{Assumption 1}. By analyzing the expressions presented in Lemma \ref{lem-CompactTrans}, we can easily conclude that, for any feasible $t$, $P^{t}_{\Delta,\Delta'}(1)$ possesses the following properties.
\begin{itemize}
\item $P^{t}_{\Delta,0}(1)$ and $P^{t}_{\Delta,\Delta+t}(1)$ are both independent of $\Delta$.
\item $P^{t}_{\Delta,\Delta'}(1)$ is independent of $\Delta$ when $\Delta>0$ and $0\leq\Delta'\leq t-1$.
\item $P^{t}_{\Delta,\Delta'}(1) = 0$ when $\Delta'>\Delta+t$ or when $t-1<\Delta'<\Delta+t$.
\end{itemize}
Leveraging the above properties, the equivalent expression in the corollary can be obtained easily. In the following, we focus on proving the properties.
\begin{itemize}
\item \textbf{property 1}: When $\Delta'=0$, $P_{\Delta,0}(1) = \sum_{t=1}^{t_{max}}p_tP^t_{\Delta,0}(1)$ for any $\Delta\geq0$. Since $P^{t}_{\Delta,0}(1)$ is independent of $\Delta$, property 1 holds in this case. Then, we consider the case of $1\leq\Delta'\leq t_{max}-1$ and $\Delta\geq\Delta'$. In this case,
\[
P_{\Delta,\Delta'}(1) = \sum_{t=\Delta'}^{t_{max}}p_tP^t_{\Delta,\Delta'}(1),
\]
where $P^t_{\Delta,\Delta'}(1)$ is independent of $\Delta$. Hence, $P_{\Delta,\Delta'}(1)$ is independent of $\Delta$. Combining together, property 1 holds.
\item \textbf{property 2}: We notice that, when $\Delta'\geq t_{max}$,
\[
P_{\Delta,\Delta'}(1) = p_{t'}P^{t'}_{\Delta,\Delta'}(1) = p_{t'}P^{t'}_{\Delta,\Delta+t'}(1).
\]
We recall that $P^{t'}_{\Delta,\Delta+t'}(1)$ is independent of $\Delta$. Then, we can conclude that $P_{\Delta,\Delta'}(1)$ depends only on $t'$. Thus, property 2 holds.
\item \textbf{property 3}: The equivalent expression in corollary indicates that the property holds when $\Delta'>\Delta+t_{max}$. In the case of $t_{max}-1<\Delta'<\Delta+1$, we have
\[
P_{\Delta,\Delta'}(1) = p_{t'}P^{t'}_{\Delta,\Delta'}(1),
\]
where $t'\leq0$. By definition, $P_{\Delta,\Delta'}(1)=0$. Hence, property 3 holds.
\end{itemize}

\section{Proof of Lemma \ref{lem-Case2TransProb}}\label{pf-Case2TransProb}
The proof is similar to that of Lemma \ref{lem-CompactTrans} and Corollary \ref{lem-StateTransProb}. We first derive the expressions of $P^t_{\Delta,\Delta'}(1)$ and $P^{t^{+}}_{\Delta,\Delta'}(1)$. To this end, we start with the case of $\Delta=0$. In this case, the transmitted update is the same as the receiver's estimate. With this in mind, we distinguish between different values of $t$.
\begin{itemize}
\item When $1\leq t<t_{max}$, the update is delivered after $t$ time slot. Hence, $\Delta'\in\{0,1,...,t\}$. Then, we further distinguish between different values of $\Delta'$.
\begin{itemize}
\item $\Delta'=0$ in the case where the receiver's estimate is correct when the update is delivered. Hence, $\Delta'=0$ happens with probability $p^{(t)}$.
\item $\Delta' = k \in\{1,2,...,t\}$ when the receiver's estimate is correct at the $(t-k)$th time slots after the transmission occurs. Then, the source flips the state and remains in the same state for the remainder of the transmission. Hence, $\Delta' = k \in\{1,2,...,t\}$ happens with probability $p^{(t-k)}p(1-p)^{k-1}$.
\end{itemize}
\item When $t = t_{max}$, the update either arrives or be discarded. In this case, $\Delta'\in\{0,1,...,t_{max}\}$. We recall that the update is the same as the receiver's estimate. Hence, the receiver's estimate will not change in both cases. Consequently, $P^{t_{max}}_{0,\Delta'}(1)=P^{t^+}_{0,\Delta'}(1)$, which can be obtained by setting the $t$ in the above case to $t_{max}$.
\end{itemize}
Combining together, for each $1\leq t\leq t_{max}$,
\[
P^{t}_{0,\Delta'}(1) = 
\begin{dcases}
p^{(t)} & \Delta'=0,\\
p^{(t-k)}p(1-p)^{k-1} & 1\leq\Delta'= k\leq t,\\
0 & otherwise.
\end{dcases}
\]
\[
P^{t^+}_{0,\Delta'}(1) = P^{t_{max}}_{0,\Delta'}(1).
\]
Then, we consider the case of $\Delta>0$. We notice that, in this case, the receiver's estimate will flip upon receiving the update. Then, we distinguish between different values of $t$.
\begin{itemize}
\item When $1\leq t<t_{max}$, the update is delivered after $t$ time slots, and the receiver's estimate will flip. Hence, $\Delta'\in\{0,1,...,t-1,\Delta+t\}$. Then, we further distinguish between different values of $\Delta'$.
\begin{itemize}
\item $\Delta'=0$ in the case where the receiver's estimate is correct when the update is received. Hence, $\Delta'=0$ happens with probability $p^{(t)}$.
\item $\Delta'=1$ when the receiver's estimate is correct at $(t-1)$th time slot after the transmission starts and becomes incorrect when the update arrives. Hence, $\Delta'=1$ happens with probability $(1-p^{(t-1)})(1-p)$.
\item $\Delta'=k\in\{2,3,...,t-1\}$ when the receiver's estimate is correct at $(t-k)$th time slot after the transmission starts. Then, the source changes state and remains in the same state. Finally, at the time slot when the update arrives, the source flips state again. Hence, $\Delta'=k\in\{2,3,...,t-1\}$ happens with probability $(1-p^{(t-k)})p^{2}(1-p)^{k-2}$.
\item $\Delta'=\Delta+t$ when the estimate is incorrect throughout the transmission. We recall that the receiver's estimate will flip when the update arrives. Hence, $\Delta'=\Delta+t$ when the source remains in the same state until the update arrives, which happens with probability $p(1-p)^{t-1}$.
\end{itemize}
\item When $t=t_{max}$ and the transmitted update is delivered, the receiver's estimate flips. In this case, $\Delta'\in\{0,1,...,t_{max}-1,\Delta+t_{max}\}$. Hence, $P^{t_{max}}_{\Delta,\Delta'}(1)$ can be obtained by setting the $t$ in the above case to $t_{max}$.
\item When $t=t_{max}$ and the transmitted update is discarded, the receiver's estimate remains the same. In this case, $\Delta'\in\{0,1,...,t_{max}-1,\Delta+t_{max}\}$. Then, we further divide our discussion into the following cases.
\begin{itemize}
\item $\Delta'=0$ when the receiver's estimate is correct at the $t_{max}$th time slot after the transmission starts, which happens when the state of the source at the time slot the update is discarded is different from that when the transmission started. Hence, $\Delta'=0$ happens with probability $1- p^{(t_{max})}$.
\item $\Delta' = k \in\{1,2,...,t_{max}-1\}$ when the receiver's estimate is correct at $(t_{max}-k)$th time slot after the transmission starts. Then, the source changes state and remains in the same state for the remainder of the transmission. Hence, $\Delta'=k\in\{1,2,...,t_{max}-1\}$ happens with probability $(1-p^{(t_{max}-k)})p(1-p)^{k-1}$.
\item $\Delta' = \Delta+t_{max}$ when the source remains in the same state throughout the transmission. Combining with the source dynamic, we can conclude that $\Delta' = \Delta+t_{max}$ happens with probability $(1-p)^{t_{max}}$.
\end{itemize}
\end{itemize}
Combining together, for $\Delta>0$ and each $1\leq t\leq t_{max}$,
\begin{multline*}
P^{t}_{\Delta,\Delta'}(1) = \\
\begin{dcases}
p^{(t)} & \Delta'=0,\\
(1-p^{(t-1)})(1-p) & \Delta'=1,\\
(1-p^{(t-k)})p^{2}(1-p)^{k-2} & 2\leq\Delta'=k\leq t-1,\\
p(1-p)^{t-1} & \Delta'=\Delta+t,\\
0 & otherwise.
\end{dcases}
\end{multline*}
\begin{multline*}
P^{t^+}_{\Delta,\Delta'}(1) = \\
\begin{dcases}
1-p^{(t_{max})} & \Delta'=0,\\
(1-p^{(t_{max}-k)})p(1-p)^{k-1} & 1\leq \Delta'= k\leq t_{max}-1,\\
(1-p)^{t_{max}} & \Delta' = \Delta+t_{max},\\
0 & otherwise.
\end{dcases}
\end{multline*}
By analyzing the expressions, we can easily conclude that $P^t_{\Delta,\Delta'}(1)$ and $P^{t^{+}}_{\Delta,\Delta'}(1)$ possess the following properties.
\begin{itemize}
\item$P^{t}_{\Delta,\Delta+t}(1)$ and $P^{t^+}_{\Delta,\Delta+t_{max}}(1)$ are independent of $\Delta$ when $\Delta>0$.
\item $P^t_{\Delta,\Delta'}(1)$ is independent of $\Delta$ when $\Delta>0$ and $0\leq\Delta'\leq t-1$.
\item $P^t_{\Delta,\Delta'}(1)=0$ when $\Delta>0$ and $t-1<\Delta'< \Delta+t$.
\item $P^{t^+}_{\Delta,\Delta'}(1)$ is independent of $\Delta$ when $\Delta>0$ and $0\leq\Delta'\leq t_{max}-1$.
\item $P^{t^+}_{\Delta,\Delta'}(1)=0$ when $\Delta>0$ and $t_{max}-1<\Delta'< \Delta+t_{max}$.
\end{itemize}
Leveraging the properties above, we proceed with proving the second part of the lemma. The equivalent expression can be obtained easily by analyzing \eqref{eq-compactTransProbs}. Hence, the details are omitted. In the following, we focus on proving the presented properties.
\begin{itemize}
\item \textbf{property 1}: We notice that, when $0 \leq\Delta' \leq t_{max} -1$ and $\Delta\geq \max\{1,\Delta'\}$,
\[
P_{\Delta,\Delta'}(1) = \sum_{t=\Delta'}^{t_{max}}p_tP^t_{\Delta,\Delta'}(1)+ p_{t^+}P^{t^+}_{\Delta,\Delta'}(1).
\]
Then, we divide the discussion into the following two cases.
\begin{itemize}
\item $\Delta\geq\max\{1,\Delta'\}$ indicates that $\Delta>0$ and $\Delta'<\Delta+t_{max}$. Hence, $P^{t^+}_{\Delta,\Delta'}(1)$ is independent of $\Delta$.
\item $\Delta\geq\max\{1,\Delta'\}$ indicates that $\Delta>0$ and $\Delta'<\Delta+t$. Hence, $P^t_{\Delta,\Delta'}(1)$ is independent of $\Delta$ for any feasible $t$.
\end{itemize}
Combining together, we can conclude that property 1 holds.
\item \textbf{property 2}: We notice that, when $\Delta'\geq t_{max}$,
\[
P_{\Delta,\Delta'}(1) = p_{t'}P^{t'}_{\Delta,\Delta'}(1)+ p_{t^+}P^{t^+}_{\Delta,\Delta'}(1).
\]
Then, we divide the discussion into the following two cases.
\begin{itemize}
\item Since $t'=\Delta'-\Delta$, $P_{\Delta,\Delta'}^{t'}(1) = P^{t'}_{\Delta,\Delta+t'}(1)$. Then, we know that $P^{t'}_{\Delta,\Delta'}(1)$ is independent of $\Delta>0$ when $t'>0$ and $P^{t'}_{\Delta,\Delta'}(1)=0$ when $t'\leq0$ by definition. Hence, $P_{\Delta,\Delta'}^{t'}(1)$ depends on $t'$.
\item When $\Delta'\geq t_{max}$ and $\Delta'\neq\Delta+t_{max}$, $P^{t^+}_{\Delta,\Delta'}(1)=0$ for $\Delta>0$. Also, $P^{t^+}_{\Delta,\Delta'}(1)$ is independent of $\Delta>0$ when $\Delta'=\Delta+t_{max}$. Hence, $P^{t^+}_{\Delta,\Delta'}(1)$ depends only on $t'$.
\end{itemize}
Combining together, property 2 holds.
\item \textbf{property 3}: When $\Delta'>\Delta+t_{max}$, the property holds apparently. When $t_{max}-1<\Delta'<\Delta+1$, 
\[
P_{\Delta,\Delta'}(1) = p_{t'}P^{t'}_{\Delta,\Delta'}(1)+ p_{t^+}P^{t^+}_{\Delta,\Delta'}(1),
\]
where $t'\leq0$. Then, by definition, $P^{t'}_{\Delta,\Delta'}(1)=0$. Moreover, we recall that $t_{max}>1$, which indicates that $P^{t^+}_{\Delta,\Delta'}(1)=0$. Hence, property 3 holds.
\end{itemize}

\section{Proof of Theorem \ref{prop-StationaryDistribution}}\label{pf-StationaryDistribution}
We recall that $\pi_{\Delta}$ satisfies \eqref{eq-CompactBalanceEq2} and \eqref{eq-TotalProbs}. Then, plugging in the probabilities yields the following system of linear equations.
\begin{equation}\label{eq-0util15}
\begin{split}
\pi_0 = & (1-p)\pi_0 + p\sum_{i=1}^{\tau-1}\pi_i+ \sum_{i=\tau}^{\infty}P_{i,0}(1)\pi_i \\
= & (1-p)\pi_0 + p\sum_{i=1}^{\tau-1}\pi_i+ P_{1,0}(1)\sum_{i=\tau}^{\infty}\pi_i.
\end{split}
\end{equation}
\begin{equation}\label{eq-0util16}
\pi_1 = p\pi_0 + \sum_{i=\tau}^{\infty}P_{i,1}(1)\pi_i = p\pi_0 + P_{1,1}(1)\sum_{i=\tau}^{\infty}\pi_i.
\end{equation}
For each $2\leq\Delta\leq t_{max}-1$,
\begin{equation}\label{eq-0util17}
\pi_\Delta = \begin{dcases}
(1-p)\pi_{\Delta-1} + P_{\tau,\Delta}(1)\sum_{i=\tau}^{\infty}\pi_i & \Delta-1<\tau,\\
\sum_{i=\tau}^{\Delta-1}P_{i,\Delta}(1)\pi_i + P_{\Delta,\Delta}(1)\sum_{i=\Delta}^{\infty}\pi_i & \Delta-1\geq\tau.
\end{dcases}
\end{equation}
For each $t_{max}\leq\Delta\leq \omega-1$,
\[
\pi_{\Delta} = \begin{dcases}
(1-p)\pi_{\Delta-1} & \Delta-1<\tau,\\
\sum_{i=\tau}^{\Delta-1}P_{i,\Delta}(1)\pi_i & \Delta-1\geq\tau.
\end{dcases}
\]
For each $\Delta\geq\omega$,
\begin{equation}\label{eq-0util6}
\pi_{\Delta}  = \sum_{i=\Delta-t_{max}}^{\Delta-1}P_{i,\Delta}(1)\pi_i.
\end{equation}
\[
\sum_{i=0}^{\tau-1}\pi_i + ET\sum_{i=\tau}^{\infty}\pi_i = 1.
\]
Note that we can pull the state transition probabilities in \eqref{eq-0util15}, \eqref{eq-0util16}, and \eqref{eq-0util17} out of the summation due to property 1 in Corollary \ref{lem-StateTransProb} and Lemma \ref{lem-Case2TransProb}. Then, we sum \eqref{eq-0util6} over $\Delta$ from $\omega$ to $\infty$.
\begin{equation}\label{eq-0util7}
\sum_{i=\omega}^{\infty}\pi_i = \sum_{i=\omega}^{\infty}\sum_{k=i-t_{max}}^{i-1}P_{k,i}(1)\pi_k.
\end{equation}
We delve deep into the right hand side (RHS) of \eqref{eq-0util7}. To this end, we expand the first summation, which yields
\begin{align*}
RHS = & \sum_{k=\tau+1}^{\omega-1}P_{k,\omega}(1)\pi_k + \sum_{k=\tau+2}^{\omega}P_{k,\omega+1}(1)\pi_k + \cdots +\\
& \sum_{k=\omega-1}^{\omega+t_{max}-2}P_{k,\omega+t_{max}-1}(1)\pi_k +\\
& \sum_{k=\omega}^{\omega+t_{max}-1}P_{k,\omega+t_{max}}(1)\pi_k + \cdots
\end{align*}
Then, we rearrange the summation.
\begin{align*}
RHS = & P_{\tau+1,\omega}(1)\pi_{\tau+1} + \sum_{k=1}^{2}P_{\tau+2,\omega+k-1}(1)\pi_{\tau+2} + \cdots +\\
& \sum_{k=1}^{t_{max}}P_{\omega-1,\omega+k-1}(1)\pi_{\omega-1} +\\
&  \sum_{k=1}^{t_{max}}P_{\omega,\omega+k}(1)\pi_{\omega} +\sum_{k=1}^{t_{max}}P_{\omega+1,\omega+k+1}(1)\pi_{\omega+1} +\cdots
\end{align*}
Leveraging property 2 in Corollary \ref{lem-StateTransProb} and Lemma \ref{lem-Case2TransProb}, we have
\begin{align*}
RHS = &\sum_{i=\tau+1}^{\omega-1}\left(\sum_{k=\tau+1}^iP_{i,t_{max}+k}(1)\right)\pi_i +\\
&\sum_{i=1}^{t_{max}}\bigg(P_{\omega,\omega+i}(1)\bigg)\left(\sum_{k=\omega}^{\infty}\pi_k\right).
\end{align*}
We define $\Pi\triangleq \sum_{i=\omega}^{\infty}\pi_i$. Then, equation \eqref{eq-0util7} becomes the following.
\begin{equation}\label{eq-0util71}
\Pi = \sum_{i=\tau+1}^{\omega-1}\left(\sum_{k=\tau+1}^iP_{i,t_{max}+k}(1)\right)\pi_i + \sum_{i=1}^{t_{max}}\bigg(P_{\omega,\omega+i}(1)\bigg)\Pi.
\end{equation}
Finally, replacing \eqref{eq-0util6} with \eqref{eq-0util71} and applying the definition of $\Pi$ yield a system of linear equations with finite size as presented in the theorem.

\section{Proof of Corollary \ref{cor-AoIISpecialCase1}}\label{pf-AoIISpecialCase1}
We start with $\tau=0$. In this case, $\omega = t_{max}+1$ and the system of linear equations becomes to the following.
\begin{align}\label{eq-0util1}
\pi_{\Delta} = & \sum_{i=0}^{\infty}P_{i,\Delta}(1)\pi_i \\
= & \begin{dcases}
P_{0,0}(1)\pi_0 + P_{1,0}(1)\sum_{i=1}^{\infty}\pi_i & \Delta=0,\\
\sum_{i=0}^{\Delta-1}P_{i,\Delta}(1)\pi_i + P_{\Delta,\Delta}(1)\sum_{i=\Delta}^{\infty}\pi_i & 1\leq\Delta\leq t_{max}.
\end{dcases}
\end{align}
\begin{align}\label{eq-0util72}
\Pi = &\sum_{i=1}^{t_{max}}\left(\sum_{k=1}^iP_{i,t_{max}+k}(1)\right)\pi_i +\\
& \sum_{i=1}^{t_{max}}P_{t_{max}+1,t_{max}+1+i}(1)\Pi.
\end{align}
\begin{equation}\label{eq-0util2}
ET\sum_{i=0}^{\infty}\pi_i = 1.
\end{equation}
We first combine \eqref{eq-0util1} and \eqref{eq-0util2}, which yields \eqref{eq-EquivalentEq3}.
\begin{figure*}[!t]
\normalsize
\begin{equation}\label{eq-EquivalentEq3}
\pi_{\Delta} = \begin{dcases}
P_{0,0}(1)\pi_0 + P_{1,0}(1)\left(\frac{1}{ET} -\pi_0\right) & \Delta=0,\\
\sum_{i=0}^{\Delta-1}P_{i,\Delta}(1)\pi_i + P_{\Delta,\Delta}(1)\left(\frac{1}{ET} - \sum_{i=0}^{\Delta-1}\pi_i\right) & 1\leq\Delta\leq t_{max}.
\end{dcases}
\end{equation}
\hrulefill
\vspace*{4pt}
\end{figure*}
Then, we have
\[
\pi_{0} = \frac{P_{1,0}(1)}{ET[1-P_{0,0}(1)+P_{1,0}(1)]}.
\]
According to \eqref{eq-0util72}, we obtain
\[
\Pi = \ddfrac{\sum_{i=1}^{t_{max}}\left(\sum_{k=1}^{i}P_{i,t_{max}+k}(1)\right)\pi_i}{1-\sum_{i=1}^{t_{max}}P_{t_{max}+1,t_{max}+1+i}(1)}.
\]

Then, we consider the case of $\tau=1$. In this case, $\omega = t_{max}+2$ and the system of linear equations reduces to the following.
\begin{equation}\label{eq-0util8}
\pi_0 = (1-p)\pi_0 + P_{1,0}(1)\sum_{i=1}^{\infty}\pi_i.
\end{equation}
\[
\pi_1 = p\pi_0 + P_{1,1}(1)\sum_{i=1}^{\infty}\pi_i.
\]
\begin{multline*}
\pi_\Delta = \sum_{i=1}^{\Delta-1}P_{i,\Delta}(1)\pi_i + P_{\Delta,\Delta}(1)\sum_{i=\Delta}^{\infty}\pi_i,\\
2\leq \Delta\leq t_{max}-1.
\end{multline*}
\begin{equation}\label{eq-corollary1util1}
\pi_{\Delta} = \sum_{i=1}^{\Delta-1}P_{i,\Delta}(1)\pi_i,\quad t_{max}\leq\Delta\leq t_{max}+1.
\end{equation}
\begin{align}\label{eq-0util73}
\Pi = & \sum_{i=2}^{t_{max}+1}\left(\sum_{k=2}^iP_{i,t_{max}+k}(1)\right)\pi_i + \\
& \sum_{i=1}^{t_{max}}P_{t_{max}+2,t_{max}+2+i}(1)\Pi.
\end{align}
\begin{equation}\label{eq-0util9}
\pi_0 + ET\sum_{i=1}^{\infty}\pi_i = 1.
\end{equation}
We first combine \eqref{eq-0util8} and \eqref{eq-0util9}, which yields
\[
\pi_0 = (1-p)\pi_0 + P_{1,0}(1)\left(\frac{1-\pi_0}{ET}\right).
\]
Hence, we have
\[
\pi_0 = \frac{P_{1,0}(1)}{pET+P_{1,0}(1)}.
\]
Similarly,
\[
\pi_1 = \frac{pP_{1,0}(1)+pP_{1,1}(1)}{pET+P_{1,0}(1)}.
\]
For each $2\leq \Delta\leq t_{max}-1$,
\begin{equation}\label{eq-corollary1util0}
\pi_\Delta = \sum_{i=1}^{\Delta-1}P_{i,\Delta}(1)\pi_i + P_{\Delta,\Delta}(1)\left(\frac{1-\pi_0}{ET} - \sum_{i=1}^{\Delta-1}\pi_i\right).
\end{equation}
According to the property 3 in Corollary \ref{lem-StateTransProb} and Lemma \ref{lem-Case2TransProb}, we know that $P_{\Delta,\Delta}(1) = 0$ when $t_{max}\leq\Delta\leq t_{max}+1$. Hence, we can combine \eqref{eq-corollary1util1} and \eqref{eq-corollary1util0}, which yields
\begin{multline*}
\pi_\Delta = \sum_{i=1}^{\Delta-1}P_{i,\Delta}(1)\pi_i + P_{\Delta,\Delta}(1)\left(\frac{1-\pi_0}{ET} - \sum_{i=1}^{\Delta-1}\pi_i\right),\\
2\leq \Delta\leq t_{max}+1.
\end{multline*}
Finally, according to \eqref{eq-0util73}, we obtain
\[
\Pi = \ddfrac{\sum_{i=2}^{t_{max}+1}\left(\sum_{k=2}^iP_{i,t_{max}+k}(1)\right)\pi_i}{1-\sum_{i=1}^{t_{max}}P_{t_{max}+2,t_{max}+2+i}(1)}.
\]

\section{Proof of Lemma \ref{lem-CompactCost}}\label{pf-ExpectedCost}
We recall that $C^k(\Delta)$ is defined as the expected AoII $k$ time slots after the transmission starts at state $s=(\Delta,0,-1)$, given that the transmission is still in progress. With this in mind, we start with the case of $\Delta=0$. As AoII either increases by one or decreases to zero, we know $C^{k}(0)\in\{0,...,k\}$. Then, we distinguish between the following cases.
\begin{itemize}
\item $C^k(0) = 0$ when the receiver's estimate is correct $k$ time slots after the transmission starts. Since $\Delta = 0$, we can easily conclude that $C^k(0) = 0$ happens with probability $p^{(k)}$.
\item $C^k(0) = h$, where $1\leq h\leq k$, happens when the receiver's estimate is correct at the $(k-h)$th time slot after the transmission starts, then, the source flips the state and stays in the same state for the remaining $h-1$ time slots. Hence, $C^k(0) = h$, where $1\leq h\leq k$, happens with probability $p^{(k-h)}p(1 - p)^{h-1}$.
\end{itemize}
Combining together, we obtain
\[
C^k(0) = \sum_{h=1}^{k} hp^{(k-h)}p(1-p)^{h-1}.
\]
Then, we consider the case of $\Delta>0$. In this case, the transmission starts when the receiver's estimate is incorrect and $C^{k}(\Delta)\in\{0,1,...k-1,\Delta+k\}$. Then, we distinguish between the following cases.
\begin{itemize}
\item $C^k(\Delta) = 0$ when the receiver's estimate is correct at the $k$th time slot after the transmission starts, which happens with probability $(1-p^{(k)})$.
\item $C^k(\Delta) = h$, where $h\in\{1,2,...,k-1\}$, happens when the receiver's estimate is correct at the $(k-h)$th slot after the transmission starts. Then, the source flips the state and stays in the same state for the remaining $h-1$ time slots. Hence, $C^k(\Delta) = h$, where $h\in\{1,2,...,k-1\}$, happens with probability $(1-p^{(k-h)})p(1 - p)^{h-1}$.
\item $C^k(\Delta) = \Delta+k$ when the estimate at the receiver side is always wrong for $k$ time slots after the transmission starts. Since $\Delta > 0$ and the receiver's estimate will not change, $C^k(\Delta) = \Delta+k$ happens with probability $(1 - p)^k$. 
\end{itemize}
Combining together, we obtain
\begin{multline*}
C^k(\Delta) = \sum_{h=1}^{k-1} h(1-p^{(k-h)})p(1-p)^{h-1} + (\Delta+k)(1-p)^k,\\
\Delta>0.
\end{multline*}

\section{Proof of Theorem \ref{prop-LazyPerformance}}\label{pf-LazyPerformance}
We recall that when $\tau=\infty$, the transmitter will never initiate any transmissions. Hence, the receiver's estimate will never change. Without loss of generality, we assume the receiver's estimate $\hat{X}_k=0$ for all $k$. The first step in calculating the expected AoII achieved by the threshold policy with $\tau=\infty$ is to calculate the stationary distribution of the induced DTMC. To this end, we know that $\pi_{\Delta}$ satisfies the following equations.
\begin{equation}\label{eq-lazypolicy1}
\pi_0 = (1-p)\pi_0 + p\sum_{i=1}^{\infty}\pi_i.
\end{equation}
\[
\pi_1 = p\pi_0.
\]
\[
\pi_{\Delta} = (1-p)\pi_{\Delta-1},\quad \Delta\geq 2.
\]
\begin{equation}\label{eq-lazypolicy2}
\sum_{i=0}^{\infty}\pi_i = 1.
\end{equation}
Combining \eqref{eq-lazypolicy1} and \eqref{eq-lazypolicy2} yields
\[
\pi_0 = (1-p)\pi_0 + p(1-\pi_0).
\]
Hence, $\pi_0=\frac{1}{2}$. Then, we can get
\[
\pi_1 = \frac{p}{2},
\]
\[
\pi_{\Delta} = (1-p)^{\Delta-1}\pi_1 = \frac{p(1-p)^{\Delta-1}}{2},\quad \Delta\geq2.
\]
Combining together, we have
\[
\pi_0=\frac{1}{2},\quad \pi_{\Delta} = \frac{p(1-p)^{\Delta-1}}{2},\quad \Delta\geq1.
\]
Since the transmitter will never make any transmission attempts, the cost for being at state $s=(\Delta,0,-1)$ is nothing but $\Delta$ itself. Hence, the expected AoII is
\[
\bar{\Delta}_\infty = \sum_{\Delta=1}^{\infty}\Delta\frac{p(1-p)^{\Delta-1}}{2} = \frac{1}{2p}.
\]

\section{Proof of Theorem \ref{prop-Performance}}\label{pf-Performance}
We recall that, for $\Delta\geq\omega$, $\pi_\Delta$ satisfies
\begin{align*}
\pi_{\Delta} = & \sum_{i=\Delta-t_{max}}^{\Delta-1}P_{i,\Delta}(1)\pi_i\\
= & \sum_{i=1}^{t_{max}}P_{i-t_{max}+\Delta-1,\Delta}(1)\pi_{i-t_{max}+\Delta-1},\quad \Delta\geq\omega.
\end{align*}
Then, by Corollary \ref{lem-StateTransProb}, when $\Delta'\geq\omega$, $P_{\Delta,\Delta'}(1) = p_{t'}P^{t'}_{\Delta,\Delta'}(1)$ where $t' = \Delta'-\Delta$. Hence,
\begin{multline*}
\pi_{\Delta} = \sum_{i=1}^{t_{max}}p_{t_{max}+1-i}P^{t_{max}+1-i}_{i-t_{max}+\Delta-1,\Delta}(1)\pi_{i-t_{max}+\Delta-1},\\
\Delta\geq\omega.
\end{multline*}
Renaming the variables yields
\[
\pi_{\Delta} = \sum_{t=1}^{t_{max}}p_{t}P^{t}_{\Delta-t,\Delta}(1)\pi_{\Delta-t},\quad \Delta\geq\omega.
\]
To proceed, we define, for each $1\leq t\leq t_{max}$,
\begin{equation}\label{eq-0util41}
\pi_{\Delta,t} \triangleq p_{t}P^{t}_{\Delta-t,\Delta}(1)\pi_{\Delta-t},\quad\Delta\geq\omega.
\end{equation}
Note that $\sum_{t=1}^{t_{max}}\pi_{\Delta,t} = \pi_{\Delta}$. Then, for a given $1\leq t\leq t_{max}$, we multiple both side of  \eqref{eq-0util41} by $C(\Delta-t,1)$ and sum over $\Delta$ from $\omega$ to $\infty$. Hence, we have
\begin{equation}\label{eq-0util51}
\sum_{i=\omega}^{\infty}C(i-t,1)\pi_{i,t} = \sum_{i=\omega}^{\infty}C(i-t,1)p_tP^t_{i-t,i}(1)\pi_{i-t}.
\end{equation}
We define $\Delta_t' \triangleq C(\Delta,1) - C(\Delta-t,1)$ where $\Delta>t$. Then, according to \eqref{eq-CompactCostRan}, we have
\[
\Delta_t' = \sum_{i=1}^{t_{max}}p_i\bigg(C^i(\Delta,1) - C^i(\Delta-t,1)\bigg).
\]
According to Lemma \ref{lem-CompactCost}, we have \eqref{eq-EquivalentEq4} and \eqref{eq-EquivalentEq5}.
\begin{figure*}[!t]
\normalsize
\begin{equation}\label{eq-EquivalentEq4}
C^{i}(\Delta-t,1) = \Delta - t + \sum_{h=1}^{i-1}\left(\sum_{k=1}^{h-1} k(1-p^{(h-k)})p(1-p)^{k-1} + (\Delta-t+h)(1-p)^h\right).
\end{equation}
\begin{equation}\label{eq-EquivalentEq5}
C^{i}(\Delta,1) = \Delta+ \sum_{h=1}^{i-1}\left(\sum_{k=1}^{h-1} k(1-p^{(h-k)})p(1-p)^{k-1} + (\Delta+h)(1-p)^h\right).
\end{equation}
\hrulefill
\vspace*{4pt}
\end{figure*}
Subtracting the two equations yields
\begin{align*}
C^{i}(\Delta,1) - C^{i}(\Delta-t,1) = & t + \sum_{h=1}^{i-1}\bigg(t(1-p)^h\bigg) \\
= & \frac{t-t(1-p)^i}{p}.
\end{align*}
Then, we have
\[
\Delta_t' = \sum_{i=1}^{t_{max}}p_i\left(\frac{t-t(1-p)^i}{p}\right).
\]
We notice that $\Delta_t'$ is independent of $\Delta$ when $\Delta>t$. Hence, \eqref{eq-0util51} can be rewritten as
\[
\sum_{i=\omega}^{\infty}\bigg(C(i,1) - \Delta_t'\bigg)\pi_{i,t} = \sum_{i=\omega-t}^{\infty}C(i,1)p_tP^t_{i,i+t}(1)\pi_i.
\]
Then, we define $\Pi_t \triangleq \sum_{i=\omega}^{\infty}\pi_{i,t}$ and $\Sigma_t \triangleq \sum_{i=\omega}^{\infty}C(i,1)\pi_{i,t}$. We notice that $P^t_{\Delta,\Delta+t}(1)$ is independent of $\Delta$ when $\Delta>0$. Hence, we obtain 
\[
\sum_{i=\omega}^{\infty}C(i,1)\pi_{i,t} - \Delta_t'\sum_{i=\omega}^{\infty}\pi_{i,t} = p_tP^t_{1,1+t}(1)\sum_{i=\omega-t}^{\infty}C(i,1)\pi_i.
\]
Plugging in the definitions yields
\[
\Sigma_t - \Delta_t'\Pi_t = p_tP^t_{1,1+t}(1)\left(\sum_{i=\omega-t}^{\omega-1}C(i,1)\pi_i+\Sigma\right).
\]
Summing the above equation over $t$ from $1$ to $t_{max}$ yields
\begin{multline*}
\sum_{t=1}^{t_{max}}\bigg(\Sigma_t - \Delta_t'\Pi_t\bigg) =\\
\sum_{t=1}^{t_{max}}\left[p_tP^t_{1,1+t}(1)\left(\sum_{i=\omega-t}^{\omega-1}C(i,1)\pi_i+\Sigma\right)\right].
\end{multline*}
Rearranging the above equation yields
\begin{equation}\label{eq-SigmaRaw}
\begin{split}
\Sigma - \sum_{t=1}^{t_{max}}\Delta_t'\Pi_t = & \sum_{t=1}^{t_{max}}\left[p_tP^t_{1,1+t}(1)\left(\sum_{i=\omega-t}^{\omega-1}C(i,1)\pi_i\right)\right] +\\
& \sum_{t=1}^{t_{max}}\bigg(p_tP^t_{1,1+t}(1)\bigg)\Sigma.
\end{split}
\end{equation}
Hence, the closed-form expression of $\Sigma$ is
\[
\Sigma = \ddfrac{\sum_{t=1}^{t_{max}}\left[p_tP^t_{1,1+t}(1)\left(\sum_{i=\omega-t}^{\omega-1}C(i,1)\pi_i\right) + \Delta_t'\Pi_t\right]}{1-\sum_{t=1}^{t_{max}}\bigg(p_tP^t_{1,1+t}(1)\bigg)}.
\]
In the following, we calculate $\Pi_t$. Combining the definition of $\Pi_t$ with \eqref{eq-0util41}, we have
\begin{align*}
\Pi_t\triangleq \sum_{i=\omega}^{\infty}\pi_{i,t} = & \sum_{i=\omega}^{\infty}\bigg(p_{t}P^{t}_{i-t,i}(1)\pi_{i-t}\bigg)\\
= & \sum_{i=\omega-t}^{\infty}\bigg(p_{t}P^{t}_{i,i+t}(1)\pi_{i}\bigg).
\end{align*}
Since $P^{t}_{\Delta,\Delta+t}(1)$ is independent of $\Delta$ when $\Delta>0$, we have
\[
\Pi_t = p_{t}P^{t}_{1,1+t}(1)\left(\sum_{i=\omega-t}^{\omega-1}\pi_{i} + \Pi\right).
\]
Combining together, we recover the results presented in the theorem.

\section{Proof of Theorem \ref{prop-ThresholdPerformance}}\label{pf-ThresholdPerformance}
We follow similar steps as in the proof of Theorem \ref{prop-Performance}. First, $\pi_\Delta$ satisfies
\begin{align*}
\pi_{\Delta} = & \sum_{i=\Delta-t_{max}}^{\Delta-1}P_{i,\Delta}(1)\pi_i \\
= & \sum_{i = 1}^{t_{max}}P_{\Delta-t_{max}+i-1,\Delta}(1)\pi_{\Delta-t_{max}+i-1},\quad \Delta\geq \omega.
\end{align*}
We recall from Lemma \ref{lem-Case2TransProb}, $P_{\Delta,\Delta'}(1) = p_{t'}P^{t'}_{\Delta,\Delta'}(1) + p_{t^+}P^{t^+}_{\Delta,\Delta'}(1)$ where $t' = \Delta'-\Delta$ when $\Delta'\geq\omega$. Then, we have \eqref{eq-EquivalentEq8} holds.
\begin{figure*}[!t]
\normalsize
\begin{equation}\label{eq-EquivalentEq8}
\pi_\Delta = \sum_{i=1}^{t_{max}}\bigg(p_{t_{max}+1-i}P^{t_{max}+1-i}_{\Delta-t_{max}+i-1,\Delta}(1) + p_{t^+}P^{t^+}_{\Delta-t_{max}+i-1,\Delta}(1)\bigg)\pi_{\Delta-t_{max}-1+i},\quad \Delta\geq\omega.
\end{equation}
\hrulefill
\vspace*{4pt}
\end{figure*}
Renaming the variables yields
\[
\begin{split}
\pi_\Delta = & \sum_{t=1}^{t_{max}}\bigg(p_tP^t_{\Delta-t,\Delta}(1) + p_{t^+}P^{t^+}_{\Delta-t,\Delta}(1)\bigg)\pi_{\Delta-t}\\
= & \sum_{t=1}^{t_{max}}\Upsilon(\Delta,t)\pi_{\Delta-t},\quad \Delta\geq\omega,
\end{split}
\]
where $\Upsilon(\Delta,t)\triangleq p_tP^t_{\Delta-t,\Delta}(1) + p_{t^+}P^{t^+}_{\Delta-t,\Delta}(1)$. We notice that $\Upsilon(\Delta,t)$ is independent of $\Delta$ when $\Delta \geq \omega$. To proceed, we define, for each $1\leq t\leq t_{max}$,
\[
\pi_{\Delta,t} \triangleq \Upsilon(\Delta,t)\pi_{\Delta-t},\quad\Delta\geq\omega.
\]
Note that $\sum_{t=1}^{t_{max}}\pi_{\Delta,t} = \pi_{\Delta}$. Then, for a given $1\leq t\leq t_{max}$, we have
\begin{equation}\label{eq-ZeroWaitPolicy4}
\sum_{i=\omega}^{\infty}C(i-t,1)\pi_{i,t} = \sum_{i=\omega}^{\infty}C(i-t,1)\Upsilon(i,t)\pi_{i-t}.
\end{equation}
We define $\Delta_t' \triangleq C(\Delta,1) - C(\Delta-t,1)$ where $\Delta>t$. Then, according to \eqref{eq-CompactCostRan2}, we have
\begin{align*}
\Delta_t' = & \sum_{i=1}^{t_{max}}p_i\bigg(C^i(\Delta,1) - C^i(\Delta-t,1)\bigg) + \\
& p_{t^+}\bigg(C^{t_{max}}(\Delta,1) - C^{t_{max}}(\Delta-t,1)\bigg).
\end{align*}
By Lemma \ref{lem-CompactCost}, we have \eqref{eq-EquivalentEq6} and \eqref{eq-EquivalentEq7}.
\begin{figure*}[!t]
\normalsize
\begin{equation}\label{eq-EquivalentEq6}
C^{i}(\Delta-t,1) = \Delta - t + \sum_{h=1}^{i-1}\left(\sum_{k=1}^{h-1} k(1-p^{(h-k)})p(1-p)^{k-1} + (\Delta-t+h)(1-p)^h\right).
\end{equation}
\begin{equation}\label{eq-EquivalentEq7}
C^{i}(\Delta,1) = \Delta+ \sum_{h=1}^{i-1}\left(\sum_{k=1}^{h-1} k(1-p^{(h-k)})p(1-p)^{k-1} + (\Delta+h)(1-p)^h\right).
\end{equation}
\hrulefill
\vspace*{4pt}
\end{figure*}
Subtracting the two equations yields
\begin{align*}
C^{i}(\Delta,1) - C^{i}(\Delta-t,1) = & t + \sum_{h=1}^{i-1}\bigg(t(1-p)^h\bigg)\\
= & \frac{t-t(1-p)^i}{p}.
\end{align*}
Then, for each $1\leq t\leq t_{max}$, we have
\[
\Delta_t' = \sum_{i=1}^{t_{max}}p_i\left(\frac{t-t(1-p)^i}{p}\right) + p_{t^+}\left(\frac{t-t(1-p)^{t_{max}}}{p}\right).
\]
We notice that $\Delta_t' = C(\Delta,1) - C(\Delta-t,1)$ is independent of $\Delta$ when $\Delta>t$. Hence, equation \eqref{eq-ZeroWaitPolicy4} can be written as
\[
\sum_{i=\omega}^{\infty}\bigg(C(i,1) - \Delta_t'\bigg)\pi_{i,t} = \sum_{i=\omega-t}^{\infty}C(i,1)\Upsilon(i+t,t)\pi_i.
\]
Then, we define $\Pi_t \triangleq \sum_{i=\omega}^{\infty}\pi_{i,t}$ and $\Sigma_t \triangleq \sum_{i=\omega}^{\infty}C(i,1)\pi_{i,t}$. We recall that $\Upsilon(\Delta,t)$ is independent of $\Delta$ when $\Delta\geq \omega$. Hence, plugging in the definitions yields
\[
\Sigma_t - \Delta_t'\Pi_t = \sum_{i=\omega-t}^{\omega-1}\Upsilon(i+t,t)C(i,1)\pi_i+\Upsilon(\omega+t,t)\Sigma.
\]
Summing the above equation over $t$ from $1$ to $t_{max}$ yields
\begin{multline*}
\sum_{t=1}^{t_{max}}\bigg(\Sigma_t - \Delta_t'\Pi_t\bigg) =\\
\sum_{t=1}^{t_{max}}\left(\sum_{i=\omega-t}^{\omega-1}\Upsilon(i+t,t)C(i,1)\pi_i+\Upsilon(\omega+t,t)\Sigma\right).
\end{multline*}
Rearranging the above equation yields
\begin{align*}
\Sigma - \sum_{t=1}^{t_{max}}\Delta_t'\Pi_t = & \sum_{t=1}^{t_{max}}\left(\sum_{i=\omega-t}^{\omega-1}\Upsilon(i+t,t)C(i,1)\pi_i\right) +\\
& \sum_{t=1}^{t_{max}}\Upsilon(\omega+t,t)\Sigma.
\end{align*}
Then, the closed-form expression of $\Sigma$ is
\[
\Sigma = \ddfrac{\sum_{t=1}^{t_{max}}\left[\left(\sum_{i=\omega-t}^{\omega-1}\Upsilon(i+t,t)C(i,1)\pi_i\right) + \Delta_t'\Pi_t\right]}{1-\sum_{t=1}^{t_{max}}\Upsilon(\omega+t,t)}.
\]
In the following, we calculate $\Pi_t$. To this end, we have
\[
\Pi_t\triangleq \sum_{i=\omega}^{\infty}\pi_{i,t} = \sum_{i=\omega}^{\infty}\Upsilon(i,t)\pi_{i-t} = \sum_{i=\omega-t}^{\infty}\Upsilon(i+t,t)\pi_i.
\]
Since $\Upsilon(\Delta,t)$ is independent of $\Delta$ if $\Delta \geq \omega$, we have
\[
\Pi_t = \sum_{i=\omega-t}^{\omega-1}\Upsilon(i+t,t)\pi_i + \Upsilon(\omega+t,t)\Pi,\quad 1\leq t\leq t_{max}.
\]
Combining together, we recover the results presented in the theorem.

\end{document}